\newcommand{\true}{\emph{true}\xspace}
\newcommand{\false}{\emph{false}\xspace}
\renewcommand{\implies}{\rightarrow}
\newcommand{\assume}{a}
\newcommand{\guarantee}{g}
\newcommand{\invariant}{\rho}
\newcommand{\fluent}{\ensuremath{f}\xspace}
\newcommand{\initf}{\emph{Init}_{\emph{\fluent}} }
\newcommand{\set}[1]{\ensuremath{\{#1\}}}
\newcommand{\gr}{{GR(1)}\xspace}
\newcommand{\sgr}{{SGR(1)}\xspace}
\newcommand{\alphabet}{\ensuremath{A}}
\newcommand{\controlledA}[1]{\ensuremath{\controlled{\alphabet}{#1}}}
\newcommand{\controlled}[2]{\ensuremath{{#1}_{#2}}}
\newcommand{\uncontrolledA}[1]{\ensuremath{\uncontrolled{\alphabet}{#1}}}
\newcommand{\uncontrolled}[2]{\ensuremath{\overline{\controlled{#1}{#2}}}}
\newcommand{\Uncontrollable}{\ensuremath{U}\xspace}
\newcommand{\Controllable}{\ensuremath{C}\xspace}
\newcommand{\livelockRemove}[1]{live({#1})}
\newcommand{\yieldE}{\gamma_{E}}
\newcommand{\yieldC}{\gamma_{C}}
\newcommand{\fairenv}{\ensuremath{\psi_{\mbox{\scriptsize e}}}}
\newcommand{\fairsys}{\ensuremath{\psi_{\mbox{\scriptsize c}}}}
\newcommand{\enenv}{\ensuremath{\mbox{en}_{e}}\xspace}
\newcommand{\ensys}{\ensuremath{\mbox{en}_{m}}\xspace}
\newcommand{\transition}[1]{\stackrel{#1}{\longrightarrow}}
\newcommand{\transitionM}[2]{{#1}^p_{_{#2}}}
\newcommand{\F}{\Diamond} 
\newcommand{\G}{\Box}
\newcommand{\U}{~\emph{U}~}
\newcommand{\W}{~\emph{W}~}
\newcommand{\actF}[1]{\overset{\centerdot}{#1}}
\newtheorem{theorem}{Theorem}[section]
\newtheorem{lemma}{Lemma}[section]
\newtheorem{corollary}{Corollary}[section]
\newtheorem{definition}{Definition}[section]
\title{\LARGE \bf
Synthesis of Run-To-Completion Controllers for Discrete Event Systems 
}
\author{Yehia Abd Alrahman$^{1}$, Victor Braberman$^{2,3}$, Nicol\'as D'Ippolito$^{2,3}$, Nir Piterman$^{1,4}$ and Sebastian Uchitel$^{2,3,5}$%
\thanks{$^{1}$Department of Computer Science and Engineering, University of Gothenburg, Sweden.}%
\thanks{$^{2}$Departamento de Computac\'on, Facultad de Ciencias Exactas y Naturales, Universidad de Buenos Aires, Argentina.}%
\thanks{$^{3}$Instituto de Ciencias de la Computaci\'on, CONICET, Argentina.}%
\thanks{$^{4}$University of Leicester, Leicester, UK.}%
\thanks{$^{5}$Imperial College London, UK.}%
\thanks{This work is supported by the following grants: the ERC Consolidator grant D-SynMA (No. 772459), the Marie Skłodowska-Curie BeHAPI (No. 778233), the grants  ANPCYT PICT 2018-3835, ANPCYT PICT 2015-1718, CONICET PIP 2014/16 N°11220130100688CO, UBACYT 20020170100419 BA,  and the Swedish research council grants (No. 2020-04963) and  SynTM (No. 2020-03401).}
}
\begin{document}
\maketitle
\thispagestyle{empty}
\pagestyle{empty}

\begin{abstract}
A controller for a Discrete Event System must achieve its goals despite its environment being capable of resolving race conditions between controlled and uncontrolled events.
Assuming that the controller loses all races is sometimes unrealistic.
In many cases, a realistic assumption is that the controller sometimes wins races and is fast enough to perform multiple actions without being interrupted.
However, in order to model this scenario using control of DES requires introducing foreign assumptions about scheduling, that are hard to figure out correctly.
We propose a more balanced control problem, named run-to-completion (RTC), to alleviate this issue. 
RTC naturally supports an execution assumption in which both the controller and the environment are guaranteed to initiate and perform sequences of actions, without flooding or delaying each other indefinitely.
We consider control of DES in the context where specifications are given in the form of linear temporal logic.
We formalize the RTC control problem and show how it can be reduced to a standard control problem. 
\end{abstract}

%

\section{Introduction}

The field of controller synthesis covers a spectrum of control problems, including \emph{Reactive Synthesis}~\cite{Pnueli:1989:SRM:75277.75293} and \emph{Supervisory Control}~\cite{ramadge89}. It targets dynamical systems whose state change is governed by the occurrence of discrete events. 
In these settings, system goals and the environment (or the uncontrolled plant) are specified as an accepted formal language, and the automatic synthesis procedure generates a correct-by-construction controller (or a supervisor). 
    
The controller must achieve its goals by dynamically disabling some of the controllable events based on the events that it has observed so far. The controller must be \emph{robust}. That is, it must be able to achieve its goals no matter what the environment does. However, the controller has no means of forcing the environment to generate an event. Thus, the environment not only identifies the possible controllable events in a given environment state, but also gets to choose the next scheduled event out of those selected by the controller and all enabled uncontrollable events. 
 
This asymmetric interaction between the environment and the controller represents a worst-case scenario that asks for producing robust controllers, achieving their goals despite the advantage offered to the environment.

In many application domains, this asymmetric interaction is too adversarial and requires adding explicit foreign assumptions that restrict the behaviour of the environment. 
One such application domain is that of embedded systems design in which reactive languages (e.g.,~\cite{Halbwachs10, DBLP:conf/ifip/Berry89, Benveniste03}) adopt a synchronous hypothesis where the system can react to an external stimulus with all the computation steps it needs~\cite{DBLP:reference/crc/SimoneTP05}. 
To handle such applications, the modeller is forced to introduce assumptions about the scheduling of the environment and the controller. These assumptions are not only hard to figure out correctly, but are also far from the actual focus of the control problem under consideration. In many cases, this may lead to generating controllers that satisfy their goals trivially by cornering the environment and disrupting its behaviour dramatically. Furthermore, the  written specifications become harder to read and understand, and consequently trickier to be incrementally developed due to their extensive dependencies.  


%
%
In this paper, we introduce a novel control problem, named \textit{run-to-completion} (RTC),  to mitigate the shortcomings of classical control, for such applications. RTC is a more balanced control problem that supports more natural modelling of systems that can initiate and perform sequences of actions in response to external stimuli. In essence, RTC provides a natural execution assumption in which both the controller and the environment may initiate and perform sequences of computation steps, without flooding or delaying each other indefinitely. 
Namely,
the controller has the ability to block environment actions for a finite time, this is akin to the controller stating that it still has something to do.
However, when the controller yields control back to the environment, it has to yield completely, i.e, the controller must allow all uncontrollable actions enabled by the environment. Furthermore, to support environment's run-to-completion, the controller must not interrupt the environment during its turn. 

RTC is suitable to control componentized systems where a response to a single external stimulus may require communication among subsystems. Due to flexible deadlines in RTC control, we are no longer required to count (or hardcode) the number of computation steps  for the system before it is ready again to react to the next stimulus. 

We show how to reduce RTC control to a modified control problem (i.e., with an asymmetric interaction). Furthermore, we show that RTC Control when used with  GR(1)~\cite{piterman06} goals can be reduced to Streett control of index 2~\cite{streettcondition1982}.

This paper is organised as follows: In Sect.~\ref{section:background} we present the necessary background and in Sect.~\ref{sec:example} we present a motivating example about a UAV reconnaissance mission. 
In Sect.~\ref{section:problem statement} and Sect.~\ref{section:examplerevistied}, we formally define RTC control and use the example to show its novel features. In Sect.~\ref{section:analysis}, we solve RTC control by a reduction to standard control. Finally, In Sect.~\ref{section:relatedwork}, we conclude and discuss related work.

\section{Background}
\label{section:background}

\subsection{Doubly-Labelled Transition System (DLTS)}
LTSs have been widely used for modelling and analysing the behaviour of concurrent and distributed systems (e.g.~\cite{magee06}. 
An LTS is a transition system where transitions are labelled with actions or events.
Here, as a part of the reasoning, we also label the states of the transition system with propositions, representing the set of events (or actions) that can be enabled from a specific state. Therefore, we use a DLTS instead. The use of DLTS is only a technicality and will not impact on the type of the generated controllers. In fact, state labels will be ignored in the generated controllers.


\begin{definition}
\label{def:LTKS} \emph{(DLTS)} 
A DLTS is $T = (S, P, A,\Delta,$ $L, s_0)$, where $S$ is a finite set of states, $P$ is a set of \emph{state propositions}, $A$ is a {\em transition alphabet} partitioned $\alphabet = \controlledA{T} \cupplus \uncontrolledA{T}$ to actions controlled by $T$ and actions monitored by $T$, $\Delta \subseteq (S \times A \times S)$ is a transition relation,  $L:S \rightarrow 2^P$ is a labeling function, and $s_0 \in S$ is the initial state.  
\end{definition}

We denote
$\Delta_\ell(s) = \{s' ~|~ (s,\ell,s')\in \Delta\}$, $\Delta_{\alphabet'}(s)=\bigcup_{\ell\in\alphabet'}\Delta_\ell(s)$, and $\Delta(s)=\Delta_\alphabet(s)$.
This notation is extended to sets of states, e.g., $\Delta_\ell(S') = \bigcup_{s\in S'}\Delta_\ell(s)$.
We say $\ell$ is enabled in state $s$ if $\Delta_\ell(s)\neq \emptyset$.

We say a DLTS is transition-deterministic if $(s,\ell,s')$ and $(s,\ell,s'')$ are in $\Delta$ implies $s'=s''$.
An execution of $T$ is a maximal sequence of states and transition labels $\pi = s_0, a_0, s_1, \ldots$ where $s_0$ is the initial state and for every $i\geq 0$ we have $(s_i, a_i, s_{i+1}) \in \Delta$. 


\begin{definition}\emph{(The Parallel Composition of DLTS(s))}\label{def:parallelcomp}
Let $M = (S_M, P_M, A_M, \Delta_M,  L_M, s_{0_M})$ and $E = (S_E, P_E, A_E, \Delta_E, L_E, s_{0_E})$ be two DLTSs.
The \emph{parallel composition} of $M$ and $E$ is defined by a symmetric and a binary operator $\|$ such that $M \| E$ is also a DLTS $T = (S_M\times S_E, P, A_M \cup A_E, \Delta, L, (s_{0_M},s_{0_E}))$, where $P=P_M\cupplus P_E$, $L(m,e)=L_M(m)\cupplus L_E(e)$, and 
$\Delta$ is the smallest relation that satisfies the rules below, 
\medskip

\begin{center}
\begin{tabular}{lcl}
    $\!  \frac{m \transition{\ell} m'} {(m,e) \transition{\ell} (m',e)} \, { \small \ell \, \not\in \, A_E}$ & 
    \hspace{0.01in} &
    $\!  \frac{e \transition{\ell} e'} {(m,e) \transition{\ell} (m,e')} \, {\small \ell \, \not\in \, A_M}$ \\
    \multicolumn{3}{c}{
    $\! \frac{m \transition{\ell} m', \ e \transition{\ell} e'} {(m,e) \transition{\ell} (m',e')} \, {\small \ell \, \in \, A_M\cap A_E}$ }

\end{tabular}
\end{center}
\end{definition}\medskip

Note that parallel composition only synchronise on actions, and thus preserves the proposition values of the two separate parts.
 
\subsection{Fluent Linear Temporal Logics}
Linear temporal logics are widely used to describe and analyse behaviour requirements~\cite{gianna03,vanLamsweerde:2000:HOG:357525.357521,Kazhamiakin:2004:FVR:1030033.1030074, Uchitel04}. 
The \emph{Fluent Linear Temporal logic (FLTL)}~\cite{gianna03} replaces state propositions in traditional temporal logics with fluents. A fluent is a predicate over a set of initiating and terminating actions. Once triggered by an initiating action, a fluent continues to hold as long as no terminating action is enabled.  
Thus, FLTL provides a uniform framework for specifying both instantaneous actions and also actions that take time~\cite{gianna03,Letier05}. 
To simplify notations we do not include the next operator in FLTL. 
All our results can be easily generalised to include the next operator.

FLTL was designed for LTS, here we adapt it for DLTS by introducing fluents to also account for the propositions that label states of a DLTS. We introduce two types of fluents: \emph{transition fluents} and \emph{proposition fluents}.
A \emph{transition fluent} \fluent is defined by a pair of sets of actions and a Boolean value: $\emph{\fluent} = \langle I_{\emph{\fluent}}, T_{\emph{\fluent}}, \initf \rangle$, where $I_{\emph{\fluent}}\subseteq Act$ is the set of initiating actions, $T_{\emph{\fluent}}\subseteq Act$ is the set of terminating actions and $I_{\emph{\fluent}}\cap T_{\emph{\fluent}}=\emptyset$. 
A transition fluent may be initially \true or \false as indicated by \emph{Init}$_{\emph{\fluent}}$.  
Every action $\ell\in Act$ induces a \emph{transition fluent}, namely $\actF{\ell}=\langle \ell, Act\setminus \set{\ell},\ \false\rangle$.
Every state proposition $p$ of a DLTS induces a \emph{proposition fluent} $p$.

Let $\mathcal{F}$ be the set of all fluents over $Act$ and $P$. 
An FLTL formula is built up from the standard Boolean connectives and the temporal 
operator $\U$ (strong until) as follows: 
\begin{eqnarray*} 
\varphi ::= \fluent \mid \neg \varphi \mid \varphi \vee \psi \mid \varphi \U \psi, 
\end{eqnarray*}
where $\fluent\in\mathcal{F}$. 
As usual we introduce $\wedge$, $\F$ (eventually), $\G$ (always), and $\W$ (weak until) as syntactic sugar. 
Let $\Pi$ be the set of infinite executions of a DLTS $T$ over \emph{Act} and $P$.
For an execution $\pi=s_0,\ell_0,s_1,\ell_1,\ldots$, we say it satisfies a transition fluent $\emph{f}$ at position $i$, denoted $\pi,i \models \emph{f}$, if and only if one of the following conditions holds:
\begin{list}{-}{\leftmargin=2em}
\item $\initf \wedge (\forall j \in \mathbb{N} \cdot 0 \leq j \leq i \rightarrow \ell_j \notin T_{\fluent})$
\item $\exists j \in \mathbb{N} \cdot (j \leq i \wedge \ell_j \in I_{\fluent}) \wedge (\forall k \in \mathbb{N} \cdot j < k \leq i \ \rightarrow \ell_k \notin T_{\fluent})$
\end{list}
It satisfies a proposition-fluent $p$ at position $i$, denoted $\pi,i\models p$, if and only if $p\in L(s_i)$.

Given an infinite execution $\pi$, the satisfaction of a formula $\varphi$ at position $i$, denoted $\pi,i\models\varphi$, is defined as follows: 
\[
\begin{array}{lcl}
\pi,i \models \fluent &\triangleq & \pi,i \models \fluent  \\
\pi,i \models \neg \varphi & \triangleq & \neg(\pi,i \models \varphi) \\
\pi,i \models \varphi \vee \psi &\triangleq& (\pi,i\models \varphi) \vee (\pi,i \models \psi) \\
\pi,i \models \varphi \U \psi &\triangleq&\exists j \geq i \cdot \pi,j \models \psi \wedge \forall \mbox{ }i \leq k < j \cdot \pi,k \models \varphi\\
\end{array}
\]

We say that $\varphi$ holds in $\pi$, denoted $\pi\models\varphi$, if $\pi,0\models\varphi$. 
A formula $\varphi \in \mbox{FLTL}$ holds in an LTS $T$ (denoted $T \models \varphi$) if it holds on every infinite execution produced by $T$.

We assume that user supplied specifications do not use proposition fluents.
However, proposition fluents are required for various parts of our analysis. 



\subsection{Controller Synthesis}
The standard control problem is as follows: 
Consider an FLTL formula $\varphi$ and a DLTS model $E$ of the environment, with the set of actions $\alphabet$ partitioned into 
environment actions $\controlledA{E}$ and monitored controller actions $\uncontrolledA{E}$.
Construct a DLTS $M$ to control $\uncontrolledA{E}$ and to monitor $\controlledA{E}$ such that when composed with $E$ (i.e. $E\|M$), the controller does not block environment actions (i.e. actions in $\controlledA{E}$),  $E\|M$ is deadlock-free, and every execution of $E\|M$ satisfies $\varphi$.
For simplicity (and taking the controller's point of view), we uniformly denote by $\Uncontrollable$ (for uncontrollable) the set $\controlledA{E}=\uncontrolledA{M}$ and by $\Controllable$ (for controllable) the set $\uncontrolledA{E}=\controlledA{M}$.
That is, $U$ is the set of actions controlled by the environment and monitored by the controller and $C$ is the set of actions monitored by the environment and controlled by the controller. 

A legal controller does not block the actions in $\Uncontrollable$ and enables only actions in $\Controllable$ that are available.  
This notion is based on that of {\em legal environment} for Interface Automata~\cite{alfaro01}. 
Formally \emph{legality} is defined as follows.

\begin{definition}\label{def:legalEnvironment}\emph{(Legality)}
Consider a DLTS model of the environment $E = (S_E, P_E , \alphabet, \Delta$, $L_E, s_{E_0})$ and a DLTS model of the controller $M = (S_M, P_M, \alphabet,\Gamma,L_M,s_{M_0})$, where $\alphabet=\Uncontrollable \cupplus \Controllable$. 
We say that $M$ is legal for $E$ if for every reachable state $(m,e)$ of $M\| E$ the following holds.
\begin{itemize}
    \item For all $\ell \in \Uncontrollable$ such that $\Delta_\ell(e)\neq \emptyset$ we have $\Gamma_\ell(m)\neq \emptyset$.
    \item For all $\ell \in \Controllable$ such that $\Delta_\ell(e)=\emptyset$ we have $\Gamma_\ell(m)=\emptyset$.
\end{itemize}
\end{definition}


\begin{definition}\label{def:lts-control}\emph{(Standard Control)}
Given a domain model in the form of a DLTS $E=(S, P, \alphabet, \Delta,  L, s_{0})$, where $\alphabet = \Uncontrollable\cupplus\Controllable$, and an FLTL formula $\varphi$, a solution for the DLTS control problem $\mathcal{E}=\langle E, \varphi, \Controllable\rangle$ is a DLTS $M=(S_M, P_M, A, \Delta_M, L_M, s_{0_M})$ such that $M$ is legal for $E$, $E\|M$ is deadlock free, and $E\|M\models \varphi$.  
\end{definition}

The synthesis problem for FLTL  is 2EXPTIME-complete~\cite{dippolito13phd}. 
Nevertheless, restrictions on the form of the goal and assumption specifications have been studied and found to be solvable in polynomial time. 
For example, goal specifications consisting uniquely of safety requirements can be solved in linear time, and particular styles of liveness properties such as \gr~\cite{piterman12} can be solved in quadratic time. 
An adaptation of \gr in the context of LTS has been presented in~\cite{dippolito13} and is defined as follows:

\begin{definition}\label{def:sgr-lts-control}\emph{(\sgr DLTS Control)}
A DLTS control problem $\mathcal{E}=\langle E, \varphi, \uncontrolledA{E}\rangle$ is \sgr if $E$ is deterministic, and $\varphi$ is of the form $\varphi=\G \invariant \wedge (\bigwedge_{i=1}^n \G\F \assume_i \rightarrow \bigwedge_{j=1}^m \G\F \guarantee_j)$, where $\invariant$, $\assume_i$ and $\guarantee_j$ are Boolean combinations of fluents. Note that $\G \invariant$ is a safety condition on both the environment and the controller. Furthermore $\assume_i$ and $\guarantee_j$ are liveness assumptions and guarantees on the environment and the controller respectively.  
\end{definition}
\section{Motivating Example}
\label{sec:example}
Consider a reconnaissance mission for a UAV, surveying a discretised area. The UAV controls the following actions:  $\mathsf{takeoff, go[x][y], takePicture[x][y], econoMode,}$ and $\mathsf{land}$. However, during surveillance, the UAV is required to monitor environment actions: $\mathsf{arrive[x][y], lowBat,}$ and $\mathsf{criticalBat}$. 

The behaviour exhibited by the UAV when not controlled is depicted on the left of Fig.~\ref{fig:UAVENV}, where after taking off it may do an arbitrary action (except for $\mathsf{takeoff}$ and $\mathsf{land}$) in its alphabet $A$ before it finally lands. The safety assumption on the environment, as depicted on the right of Fig.~\ref{fig:UAVENV}, ensures that $\mathsf{arrive[x][y]}$ may only happen as a result of a $\mathsf{go[x][y]}$ action. For the sake of presentation, we only consider an area, consisting of two locations: (1,1) and (1,2).


\begin{figure}[h]
\begin{tabular}{cc}
\includegraphics[scale=.2]{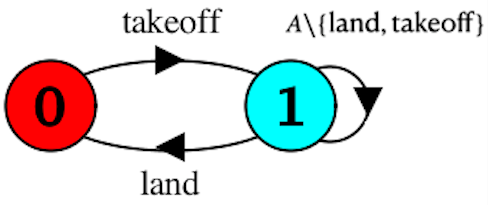}&\quad
\includegraphics[scale=.35]{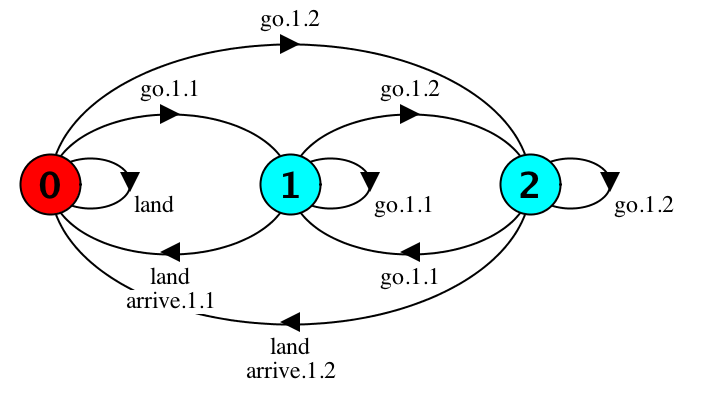}
\\
\end{tabular}
\caption{(left) Model of the UAV  and (right) the Environment assumption}
\label{fig:UAVENV}
\end{figure}


We want to synthesise a controller for the UAV, satisfying the following safety goals: 
\begin{enumerate}

\item Landing must only occur after taking a picture for every locations or upon a critical battery alert.
\[ \G (\actF{land} \implies  \forall x,y \ \cdot \ Sensed[x][y] \ \vee \ CritBat)\]
where fluent $Sensed[x][y]$ is defined as $\langle\{\mathsf{takePicture[x][y]\}, \{takeoff\}},\false\rangle$ and fluent $CritBat$ as $\langle\mathsf{\{criticalBat\}, \{takeoff\},\ \false}\rangle$

\item Taking a picture for a particular location must only happen at that location:
\[\G (\forall x,y \cdot \actF{takePicture[x][y]} \implies At[x][y])\]

 where fluent $At[x][y]$ is defined as $\langle\{\mathsf{arrive[x][y]}\},$ $\mathsf{\{go[x'][y'], land\},\ \false}\rangle$

 \item Low battery alerts must trigger economy flying mode as soon as possible: 
\[
\begin{array}{c}
\G (\actF{lowBat} \implies ((\neg \bigvee_{\ell \in C\setminus \{ \actF{land},\actF{econoMode} \}} \ell) \hspace{1cm}\\[1ex]
\hfill\W \actF{economode})
\end{array}
 \]

 \item Critical battery alerts must trigger immediate landing: 
\[
\begin{array}{c}
\G (\actF{criticalBat} \implies ((\neg \bigvee_{\ell \in C\setminus\{\actF{land}\}} \ell) \W \actF{land}))
\end{array}
 \]

\end{enumerate}

Finally, the liveness goal for the UAV controller is always eventually landing: $\G\F land$. We stress that 
the safety of landing implies that this happens only after having completed the survey or in response to a critical battery alert. Furthermore, when the UAV issues a $\mathsf{go[x][y]}$ command, we require that the environment ensures always eventual arrival: $\G\F \neg PendingArrival$, where fluent $PendingArrival$ is defined as $\langle\{\mathsf{go[x][y]}\}, \{\mathsf{arrive[x][y], land}\},\ \false\rangle$.

No solution for this control problem exists because the environment can flood the controller by generating an infinite number of $\mathsf{lowBat}$ and $\mathsf{criticalBat}$ events, impeding all controlled actions and hence progress towards the liveness goal. The non existence of such a solution stems from an unrealistic assumption on the environment. Namely, that the environment may impede the progress of the controller merely by a continual notification of a drained battery.  

A natural environment assumption that can be introduced to avoid this is to cap the number of $\mathsf{lowBat}$ and $\mathsf{criticalBat}$ events. In  Fig.~\ref{fig:NaiveNoFlood} we show one such constraint in which a maximum of one $\mathsf{lowBat}$ and one $\mathsf{criticalBat}$ can occur between  $\mathsf{go}$ commands.

However, this assumption yields controllers that once they have taken off they keep hovering until their batteries are drained, and consequently they land. This way they meet all of their safety goals while achieving their liveness goals by cornering the environment and restricting its set of possible actions to $\mathsf{lowBat}$ and $\mathsf{criticalBat}$, i.e, they never issue $\mathsf{go}$ commands. Note that if $\mathsf{go}$ commands are never issued, $\mathsf{arrive}$ events cannot occur and thus the only environment events that can and will eventually occur  are $\mathsf{lowBat}$ and $\mathsf{criticalBat}$.  
Indeed, such controllers would never allow the UAV to complete the surveying mission (i.e., landing always occurs because of $criticalBat$ and never because of having achieved $Sensed[x][y]$ for all $x$ and $y$).

\begin{figure}[h]
\centering
\includegraphics[scale=.35]{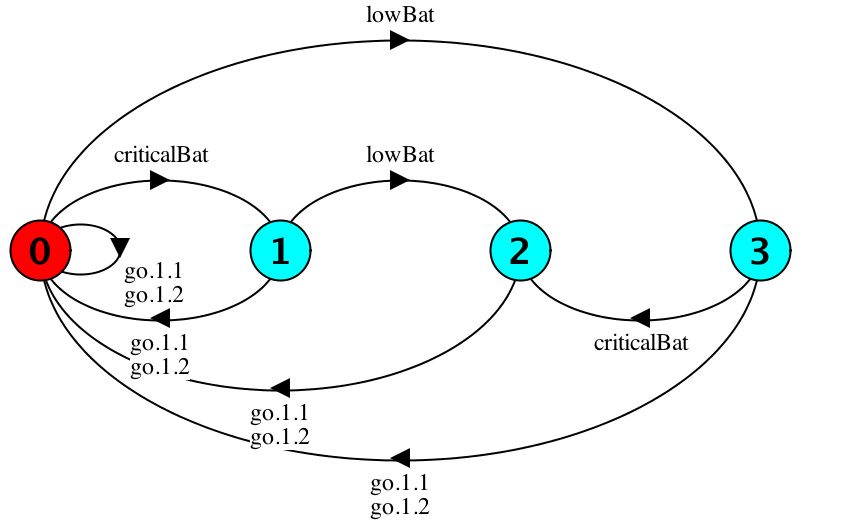}
\caption{Naive Environment Assumption to Avoid Flooding.}
\label{fig:NaiveNoFlood}
\end{figure}

The assumption that does the trick while avoiding to synthesise trivial controllers is achieved by restricting  $\mathsf{lowBat}$
and 
$\mathsf{criticalBat}$
to happen \emph{once and only} after issuing a $\mathsf{go[x][y]}$ command (i.e. when $\mathsf{arrived}$ events are also enabled). This assumption is depicted in Fig.~\ref{fig:NoFlood}. 

\begin{figure}[h]
\centering 
\includegraphics[scale=.35]{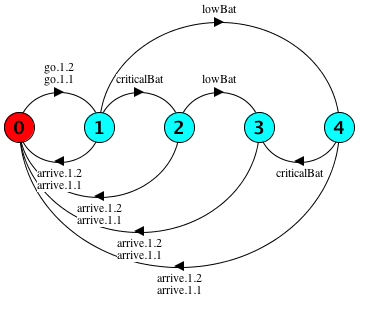}
\caption{Assumption to Avoid Flooding.}
\label{fig:NoFlood}
\end{figure}

The example clearly shows how simplifying assumptions (e.g., Figures~\ref{fig:NaiveNoFlood} and \ref{fig:NoFlood}) can be tricky for the modeller to figure out, while avoiding unrealistic situations for unrealizability in control problems and also avoiding trivial solutions.



\section{Problem Statement}
\label{section:problem statement}

As exhibited in the previous section, the prevalent approach to control in Discrete Event Systems poses serious modelling problems related to the continual triggering of environment events that have to be dealt with by the controller. In many applications, the controller has to execute a sequence of steps (or a finite protocol) in response to a single event. However, this would not be possible if the environment keeps triggering events, flooding the controller with uncontrollable events, and thus impeding its progress towards completing its designated tasks.

By definition, the environment has a double role.
It identifies the possible controller actions in a given environment state and it also represents the (adversarial) behaviours.
The first corresponds to physical/software restrictions (e.g., $\mathsf{go}$ happens only after $\mathsf{takeoff}$); the second corresponds to the scheduling of the next event (e.g., $\mathsf{criticalBat}$ can always win the race against $\mathsf{land}$), because the controller cannot disable environment events and the environment always picks the next event out of those selected by the controller and all enabled uncontrollable events in an environment state.
 
To mitigate this problem, the modeller has to carefully consider how to restrict the environment.
Essentially, the modeller is forced to introduce assumptions about the scheduling of the environment and the controller that are an artifact resulting from the definition of (an asymmetric) control problem. These assumptions are not only hard to figure out correctly, but are also far from the actual focus of the control problem under consideration. This makes written specifications harder to read and understand, and consequently trickier to be incrementally developed due to their extensive dependencies.   
Here, we suggest a more balanced control problem. 
We call this approach \emph{Run-to-Completion (RTC)} as both 
the environment and the controller can perform sequences of actions.
At the same time, neither can flood the other or delay it indefinitely.


In RTC control the notion of legality is more subtle. 
The controller has to be able to disable environment actions, this is akin to the controller stating that it still has something to do.
However, when some uncontrollable action is enabled by the controller, the controller must allow all of them. Furthermore, to support environment's run-to-completion, the controller must not interrupt the environment when it is the environment that is moving. This amounts to saying that if the environment has moved, the controller must enable all uncontrollable actions.
This is formalized below. 
\begin{definition}\label{def:rtclegalEnvironment}\emph{(Legality under RTC semantics)}
Consider the DLTSs $E = (S_E, P_E, \alphabet, \Delta,  L_E, s_{E_0})$ and  $M = (S_M, P_M, \alphabet,\Gamma,L_M,s_{M_0})$, where $\alphabet=\Uncontrollable \cupplus \Controllable$.
We say that $M$ is run-to-completion (RTC) legal for $E$ if for every reachable state $(e,m)$ of $E\| M$ the following holds.
\begin{itemize}
\item 
When allowing the environment to move, allow all its possible actions:
If $\Gamma_{\Uncontrollable}(m)\neq\emptyset$  then for every $\ell\in \Uncontrollable$ such that $\Delta_{\ell}(e)\neq \emptyset$ we have that $\Gamma_{\ell}(m)\neq \emptyset$.
\item
After uncontrolled actions, let the environment progress towards completion: If $m\in \Gamma_{\Uncontrollable}(S_M)$ then for every $\ell\in \Uncontrollable$ such that $\Delta_{\ell}(e)\neq \emptyset$ we have that $\Gamma_\ell(m)\neq \emptyset$.
\item
 For every $\ell\in\Controllable$ such that $\Delta_\ell(e)=\emptyset$ we  have that $\Gamma_\ell(m)=\emptyset$.
\end{itemize}
\end{definition}

Additionally, we have to ensure that both the environment and the controller are non-Zeno.
That is, both do not take an infinite sequence of actions without giving the other opportunities for making progress. 
{On the controller side, we require that all computations are (controller) non-Zeno. 
On the environment side, we consider only (environment) non-Zeno computations for the satisfaction of the goal. The latter is because it is valid for a controller to chose never to take a controlled action if this ensures its goal.  }

To formalize the non-Zeno assumption we first introduce four auxiliary formulas: $c$, $u$, $pass_E$, and $pass_M$.
Given an environment to be controlled $E$, a candidate controller $M$, and their parallel composition $E\parallel M$,
we assume that in both $E$ and $M$ (separately) for every $\ell\in\alphabet$ there are propositions $\transitionM{\ell}{E} \in P_E$ and $\transitionM{\ell}{M}\in P_M$ such that $(s,\ell,s')\in \Delta_E$ iff $\transitionM{\ell}{E} \in L(s)$ and similarly for $M$. 
Let 
$c=\bigvee_{\ell\in\Controllable}\actF{\ell}$,
$u=\bigvee_{\ell\in\Uncontrollable}\actF{\ell}$,
$pass_M=\bigwedge_{\ell\in\Controllable} \neg \transitionM{\ell}{M}$,
and let
$pass_E=\bigwedge_{\ell\in\Uncontrollable} \neg \transitionM{\ell}{E}$.
That is, $c$ and $u$ are formulas specifying the possibility of executing some controllable and uncontrollable actions, respectively. 
The formulas $pass_E$ and $pass_M$ characterize states where the environment and, respectively, the controller do not enable any uncontrollable and controllable action. 
That is, in $pass_E$ all uncontrollable actions are \emph{impossible} in the environment and in $pass_M$ all controllable actions (if exist) are \emph{not enabled} by the controller.
Note that by the definition of legality the controller can only enable controllable actions that are enabled in the environment. 

We now define the formulas $\fairenv$ and $\fairsys$ denoting non-Zeno-ness assumptions on the environment and the controller respectively.
Let $\fairenv = \G\F (c \ \vee \ pass_M)$.
That is (if enabled in the environment model), the environment allows infinitely many controllable actions (by $M$) in the execution or there are infinitely many states visited in which $M$ does not enable controllable actions.
Let $\fairsys = \G\F (u \ \vee \ pass_E)$.
That is, the controller allows infinitely many uncontrollable actions (by $E$) in the execution or there are infinitely many states visited in which $E$ does not enable uncontrollable actions.

\begin{definition}\label{def:zeno-rtc-lts-control}\emph{(RTC  Control)}
Given an environment model $E=(S, P_E, \alphabet, \Delta,  L_E, s_{0})$ and an FLTL formula $\varphi$, where $\alphabet = \Uncontrollable\cupplus \Controllable$ is defined as before. 
A solution for the RTC control problem $\mathcal{E}=\langle E, \varphi, \Controllable\rangle$ is a DLTS $M=(S_M$, $P_M$, $\alphabet$, $\Delta_M$,  $L_M$, $s_{0_M})$ such that $M$ is RTC legal for $E$, 
$E\parallel M$ is deadlock free, and every execution 
$\pi$ of $E\|M$ satisfies $\pi\models \fairsys \wedge (\fairenv 
\implies \varphi)$.  
\end{definition}

We note that we cannot move the condition $\varphi_c$ into the implication. Indeed, this would imply that by \emph{not} fulfilling $\varphi_c$ the controller is able to force the environment to violate $\varphi_e$ as well. Thus, the controller would trivially fulfil the goal by blocking the environment forever. 

\section{Example Revisited}\label{section:examplerevistied}
We revisit the example in Sect.~\ref{sec:example} under RTC control. We show how RTC control relieves the modeller from dealing with intricate scheduling issues that are hard to figure out correctly. In fact, the modeller is no longer required to come up with foreign modelling artefacts (i.e., Fig.~\ref{fig:NaiveNoFlood} and Fig.~\ref{fig:NoFlood}) to avoid unrealistic situations for unrealizability or to ensure run-to-completion. We also show how RTC control permits writing loosely-coupled specifications, and thus facilitates incremental development.


In Fig.~\ref{fig:RTCcontrollerTwoStates}, we show a snippet of the RTC controller for the surveillance mission. Due to the change of control mode we no longer need to cap the number of uncontrolled events to avoid flooding. This is naturally captured in RTC control as both the environment and the controller may perform finite sequences  of actions without flooding or delaying each other indefinitely.
Note the path via states 0, 1, 2, 3, 17, 16 where the UAV has arrived to location [1][1] but both the $\mathsf{criticalBat}$ and the $\mathsf{lowBat}$ alarms have been raised. In state 16, an arbitrary number of uncontrolled events (i.e., $\mathsf{arrive[1][1], criticalBat, lowBat}$) can occur; however if fairness assumption $\psi_e$ holds, then the controller will get a chance to execute and at this point it can perform all the controlled actions it needs to do in order to satisfy the \textit{safety} requirements (3) and (4) in Sect.~\ref{sec:example} (i.e., $\mathsf{land}$ and $\mathsf{econoMode}$ via state 13 to reach state 0). 

The same path shows how the environment can also run to completion, raising $\mathsf{lowBat}$ and $\mathsf{criticalBat}$ (states 3, 17 and 16) should it want to. Alternatively,  it may forfeit its turn and the controller may land from  state 17. 

Also note how in state 3, if given the chance, the controller will both take the picture it needs and go to the next location (states 6 and 7) even though there is no explicit  requirement. It does so, as it attempts to do as many actions as it can while progressing towards its liveness goal (i.e., land). 


\begin{figure}[ht]
\includegraphics[width=0.45
\textwidth]{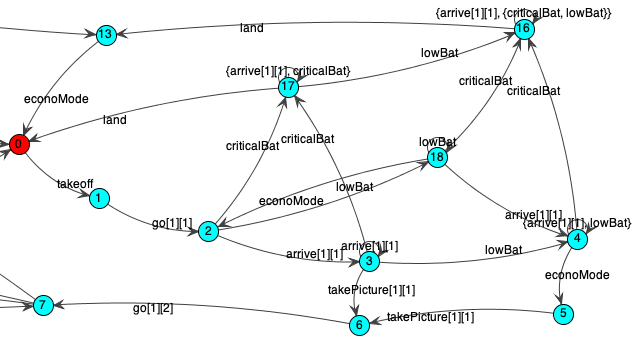}
\caption{RTC Controller (Snippet)}
\label{fig:RTCcontrollerTwoStates}
\end{figure}


Interestingly, the fairness assumptions $\fairenv$ and $\fairsys$ in RTC control remove any possible clashes among subgoals in the specifications, by maintaining fair executions and flexible deadlines. Consider, for instance, the coupling between the \textit{safety} requirements (or subgoals) (3) and (4) in Sect.~\ref{sec:example} to hardcode the concept of \emph{as soon as possible}. 

In standard control, the next controllable actions after a $\mathsf{lowBat}$ or a $\mathsf{criticalBat}$ in $(3)$ and $(4)$ cannot be specified independently as otherwise they would contradict each other, e.g., if the next controllable action after $\mathsf{lowBat}$ needs to be $\mathsf{econoMode}$ instead of $\mathsf{land},\mathsf{econoMode}$, then if $\mathsf{lowBat}$ and $\mathsf{criticalBat}$ happen consequtively, the controller cannot respond to either one. These subgoals require attention due to the subtle interactions between them.

In RTC control, we write specifications on top of a fair interaction model where both the controller and the environment are given the chance to run to completion, without counting or fixing deadlines. This suggests that we can write cleaner specifications that remove the explicit dependencies among subgoals, and thus enhancing the readability, and consequently facilitating incremental development.  

That said, we introduce convenient schemata to naturally specify high-level concepts like \textit{as soon as possible} and \textit{urgent response} under RTC control. We believe these schemata are more intuitive and less error prone. The schemata relate two Boolean combinations of fluents $\phi$ and $\psi$: 
$$
\begin{array}{l c l}
\mbox{\textsc{Asap}}(\psi)& = & ((\bigwedge_{\ell \in C} \neg \actF{\ell}) \ W ((\bigvee_{\ell \in C} \actF{\ell}) \ W \psi)) \\
\mbox{\textsc{urgRsp}}(\phi, \psi)& = & \G [\phi \implies \mbox{\textsc{Asap}}(\psi)]
\end{array}
$$


Now, we may replace subgoals (3) and (4) in Sect.~\ref{sec:example} with more natural enunciations. Namely, that $\mathsf{econoMode}$ and $\mathsf{land}$ are \textit{urgent} response requirements to $\mathsf{lowBat}$ and $\mathsf{criticalBat}$ respectively: 
\[{\textsc{urgRsp}}(\actF{lowBat},\actF{econoMode}) \wedge {\textsc{urgRsp}}(\actF{criticalBat},\actF{land})\]

Clearly, these schemata removed the coupling in (3) and (4), while maintaining correctness under RTC control.


\section{Analysis}
\label{section:analysis}

We show how to solve the RTC control problem by a reduction to a standard control problem. Recall that the environment in standard control  always gets to choose the next event out of those selected by the controller and all enabled uncontrollable events in a given environment state. Therefore, we need to model the ``act'' of yielding control explicitly in the modified control problem. Thus, the analysis may refer directly to when each side is yielding control. 


Consider an RTC control problem $\varepsilon = \langle E, \varphi, \Controllable\rangle$.
We now define a DLTS that captures the transference of control between the environment and the controller:

\begin{definition}\label{def:yieldLTS}\emph{(Yield DLTS)}
Let $\Uncontrollable$ and $\Controllable$ be the set of actions controlled by the environment and the controller, respectively. 
The yield DLTS is defined as $Y = (\set{c, e}, \emptyset, \set{\yieldC,\yieldE} \cup \Controllable \cup \Uncontrollable, \Delta,  L, e)$. Where 
$\Delta = \set{(c, \yieldC, e), (e, \yieldE, c)}
\cup \set{e \transition{\ell} e\ |\ \ell \in \Uncontrollable}
\cup \set{c \transition{\ell} c\ |\ \ell \in \Controllable}$.
\end{definition}

Now, we define a (standard) control problem over $E\parallel Y$. Let $\Uncontrollable^{+}$ be $\Uncontrollable \cup \{\yieldE\}$ and $\Controllable^{+}$ be $\Controllable \cup \{\yieldC\}$.
We use the fluents \enenv and \ensys, which indicate whether $Y$ is in state $e$ or $c$.
Formally, $\enenv = \langle\set{\yieldC},\set{\yieldE},\true\rangle$ and $\ensys = \langle\set{\yieldE},\set{\yieldC},\false\rangle$. 
Intuitively, this corresponds to RTC legality as when the environment $E\| Y$ is in a state of the form $(s,e)$ all uncontrollable actions are enabled. Furthermore, uncontrollable actions remain in states of this form (and thus cannot be interrupted).
As for controllable actions, they lead to states of the form $(s,c)$, where only controllable actions are enabled, allowing the controller to take a sequence of actions.

The composition $E\parallel Y$ turns all deadlocks in $E$, which the controller should avoid, to livelocks, where the two sides cooperate to stop time.  
That is, $E\parallel Y$ gets trapped in an infinite sequence of yield transitions (or Livelock cycle) $s_0,\gamma_E,s_1,\gamma_C,s_2,\gamma_E,\dots$, namely when both $\Delta_{C}(s_i)$ and $\Delta_{U}(s_i)$ are empty.

The Livelock Removal Operator, defined below, removes livelock cycles in a DLTS by removing yield transitions by the controller (resp. environment) to states in which the environment (resp. controller) can only yield back. 

\begin{definition}\label{def:removeLivelock}\emph{(Livelock Removal Operator)}
Let $N = (S, P, A, \Delta,  L, s_0)$ be a DLTS obtained by parallel composition with $Y$.
The \emph{livelock removal operator}  $\livelockRemove{N}$ is a DLTS $(S, P, A, \Delta', L, s_0)$ where 
\[\begin{array}{c}
\Delta' = \set{ (s,\ell,s') \in \Delta ~|~ \ell \in \set{\yieldC,\yieldE} \implies\hfill\\[1ex] \hfill\exists \ell' \notin\set{\yieldC,\yieldE} \cdot\ \Delta_{\ell'}(s')\neq \emptyset}\\[1ex] 
\end{array}
\]
\end{definition}

That is, the transitions with actions $\yieldC$ or $\yieldE$ are retained only if the environment/controller can do something other than yielding back control immediately. 
Following the removal of livelocks, we can reduce the RTC control to the following control problem.

\begin{theorem}\label{thm:analysis-control}\emph{(Analysis Control)} Consider 
an environment model $E=(S, P, \alphabet, \Delta,  L, s_{0})$,  
where $\alphabet=\Uncontrollable 
\cupplus \Controllable$ the set of actions controlled and monitored by $E$ respectively, and an FLTL formula $\varphi$.
A solution for the RTC control problem $\varepsilon=\langle E, \varphi, \Controllable\rangle$ exists if and only if the standard control problem $\varepsilon^+=\langle \livelockRemove{E\|Y}, \varphi^+, \Controllable^+\rangle$ is controllable, and $\varphi^+$ is defined below. For simplicity, we use $A$ to denote $\bigvee_{\ell \in A}\actF{\ell}$.
$$\begin{array}{l l}
\varphi^+ = & \G\F(\enenv \vee (\bigwedge_{\ell \in \Uncontrollable}\neg \ell^p_E)) ~\wedge \\
& (\G\F (\ensys \vee  (\bigwedge_{\ell \in \Controllable}\neg \ell^p_E))
\rightarrow [\G\F \alphabet  
\rightarrow  \varphi])\\[1ex]\end{array}$$  

\end{theorem}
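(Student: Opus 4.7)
My plan is to establish a correspondence between solutions of the RTC problem $\varepsilon$ and solutions of the standard problem $\varepsilon^+$ by inserting (in the forward direction) or deleting (in the backward direction) yield actions between consecutive actions of different colour in each trace.

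For the forward direction, starting from an RTC-legal $M$ solving $\varepsilon$, I would construct $M^+$ over states $S_M\times \set{c,e}$ with initial state $(s_{0_M},e)$. From $(m,c)$ I would include $(m,c)\transition{\ell}(m',c)$ for every $\ell\in\Controllable$ with $m\transition{\ell}m'$ in $M$, together with $(m,c)\transition{\yieldC}(m,e)$ whenever the target survives $\livelockRemove{\cdot}$; from $(m,e)$ I would include $(m,e)\transition{\ell}(m',e)$ for every $\ell\in\Uncontrollable$ with $m\transition{\ell}m'$ in $M$, plus $(m,e)\transition{\yieldE}(m,c)$. RTC legality of $M$ then yields legality of $M^+$ for $\livelockRemove{E\|Y}$: in an $(m,e)$-state every $E$-enabled uncontrollable action is matched because $M$ itself matches them, and in an $(m,c)$-state $M^+$ proposes only controllable actions $E$ permits. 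Deadlock freedom of $E\|M$ lifts to the modified composition because $\livelockRemove{\cdot}$ only removes yield transitions whose targets have no non-yield successor.

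For the backward direction, given $M^+$ solving $\varepsilon^+$, I would project out the $Y$-component, treating $\yieldC$ and $\yieldE$ as silent steps (and collapsing them). Checking RTC legality in the projection is case-by-case: whenever some $\ell\in\Uncontrollable$ is enabled in $E$ at the projected state, legality on $Y$ forces $M^+$, in its $e$-state (reached after at most one $\yieldC$), to enable every $E$-enabled $\Uncontrollable$-action; the ``after-uncontrolled'' condition of Def.~\ref{def:rtclegalEnvironment} holds because uncontrollable transitions keep $Y$ in state $e$; controllability is inherited directly from legality of $M^+$.

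The hard part will be matching the acceptance conditions. I plan to show that the first conjunct of $\varphi^+$ is the yield-translation of $\fairsys$: visiting an $e$-state is exactly the moment $M^+$ has yielded, and in the projected execution it corresponds to $M$ enabling all uncontrollable actions, so $\enenv \vee \bigwedge_{\ell\in\Uncontrollable}\neg\transitionM{\ell}{E}$ holding infinitely often matches $u\vee pass_E$ holding infinitely often. A dual argument handles $\fairenv$ and the premise of the implication. The extra antecedent $\G\F\alphabet$ excludes traces that eventually loop on $\yieldC,\yieldE$ alone; since $\livelockRemove{\cdot}$ cuts yield transitions entering states from which only yields back can be taken, such loops must be finite, so failure of $\G\F\alphabet$ would force a deadlock, contradicting deadlock freedom of the composition. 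Finally, because $\varphi$ is built from fluents defined by actions in $\alphabet$ and from propositions of $E$---none of which are affected by yield actions---$\varphi$ is invariant under insertion and deletion of yields, closing the bijection on acceptance in both directions.
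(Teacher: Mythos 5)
There is a genuine gap in your forward direction. Your $M^+$ uses a single copy of each controller state on the $c$-side and enables $\yieldC$ from $(m,c)$ whenever the environment permits it. Since the scheduler in $\varepsilon^+$ is adversarial, it can then always pick $\yieldC$ immediately after $\yieldE$, so the controller never gets to execute a real controllable action even though it is offering one. Concretely, take a run of $E\|M$ of the form $u_1,u_2,\dots$ in which the environment floods and $M$ persistently enables some controllable action: then $pass_M$ fails and no $c$-action occurs, so $\fairenv=\G\F(c\vee pass_M)$ fails and the RTC problem does \emph{not} require $\varphi$ on this run. In your $E^+\|M^+$ this lifts to $u_1,\yieldE,\yieldC,u_2,\yieldE,\yieldC,\dots$, which satisfies $\G\F\enenv$, $\G\F\ensys$ and $\G\F\alphabet$; hence $\varphi^+$ demands $\varphi$ on it, which need not hold. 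So your $M^+$ can fail to solve $\varepsilon^+$ even though $M$ solves $\varepsilon$. The paper avoids exactly this by splitting the $c$-side into two copies $(s,c,t,1)$ and $(s,c,t,2)$: from the state entered by $\yieldE$ the controller offers $\yieldC$ only when it has no controllable action available, and only after actually performing a controllable action (reaching the $2$-copy) does it offer $\yieldC$ back. That discipline is what makes $\G\F(\ensys\vee\bigwedge_{\ell\in\Controllable}\neg\transitionM{\ell}{E})$ a faithful translation of $\fairenv$.

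A related soft spot is the backward direction. ``Collapsing'' the yields leaves unspecified which $M^+$-successor a projected uncontrollable transition should use, and the choice matters: if $M$ takes the direct uncontrollable transition of $M^+$ whenever one exists, then a run of $E\|M$ on which $M$ passes infinitely often (so $\fairenv$ holds via $pass_M$ and $\varphi$ is owed) while $E$ still enables controllable actions corresponds to a run of $E^+\|M^+$ on which neither $\ensys$ nor $\bigwedge_{\ell\in\Controllable}\neg\transitionM{\ell}{E}$ ever holds, and you cannot invoke $\varphi^+$ to conclude $\varphi$. The paper therefore routes every uncontrollable step through a $\yieldE,\yieldC$ detour whenever that detour exists in $M^+$, which is what guarantees the antecedent of $\varphi^+$ on the lifted run. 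Finally, your justification of the $\G\F\alphabet$ antecedent is incorrect: $\livelockRemove{\cdot}$ only deletes a yield into a state where $E$ enables no non-yield action, so whenever $s$ enables both controllable and uncontrollable actions the infinite loop $\yieldE,\yieldC,\yieldE,\dots$ survives in $\livelockRemove{E\|Y}$ and is not a deadlock; such runs are simply exempted by the $\G\F\alphabet$ antecedent, not rendered impossible.
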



Note that $\varphi^+$ mimics formula $\fairsys \wedge (\fairenv \implies \varphi)$ of the RTC control formulation with  $\fairsys = \G\F (u \ \vee \ pass_E)$
and $\fairenv = \G\F (c \ \vee \ pass_M)$. 
There are two differences.
First, $\varphi^+$ replaces $pass_M$ by $\bigwedge_{\ell \in\Controllable}\neg \ell^p_E$.
Second, $\varphi^+$ disregards traces in $\epsilon^+$ in which the environment and the controller collaborate to stop time. This is done by evaluating $\varphi$ only on traces satisfying $\G\F A$.
The second is not a problem when extracting an RTC controller as the environment of the RTC controller cannot stop time. 

The proof of Theorem~\ref{thm:analysis-control} (reported in the full version on~\cite{alrahman2020synthesis} due to space limitations) shows that given 
a solution $M^+$ of $\varepsilon^+$ we can construct an RTC 
solution $M$ of $\varepsilon$.
The size of $M$ is at most twice the size of $M^+$.

\begin{corollary} Given an DLTS solution $M^+$
    for the modified control problem $\varepsilon^+$, 
    we can construct a solution $M$ to $\varepsilon$ such that the number of states of $M$ is at most twice the number of states of $M^+$. Furthermore, if $M^+$ is deterministic then $M$ is also deterministic.

\end{corollary}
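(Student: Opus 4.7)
The plan is to lift the construction implicit in the proof of Theorem~\ref{thm:analysis-control} and read off both claims from it. That construction associates with every reachable state $m^{+}$ of $M^{+}$ a pair $(m^{+}, \sigma)$ where $\sigma \in \{e, c\}$ records the current state of the yield DLTS $Y$, and uses these reachable pairs as the states of $M$. Since $Y$ has exactly two states, the resulting DLTS has at most $2 |S^{+}|$ states, which is the size bound in the statement; the initial state of $M$ is $(s_{0}^{+}, e)$, reflecting that $Y$ starts in $e$.

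For RTC correctness I would appeal directly to Theorem~\ref{thm:analysis-control}: the controller $M$ so defined is, by construction, the DLTS whose composition with $E$ mimics the composition of $M^{+}$ with $E \parallel Y$ once the yield labels are erased. RTC legality of $M$ follows from the standard legality of $M^{+}$ together with the encoding of the fluents \enenv and \ensys inside $\varphi^{+}$; deadlock-freedom follows from the livelock-removal step; and the satisfaction of $\fairsys \wedge (\fairenv \rightarrow \varphi)$ on every execution of $E \parallel M$ corresponds, trace by trace, to the satisfaction of $\varphi^{+}$ on the associated execution of $E \parallel Y \parallel M^{+}$.

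The new content over Theorem~\ref{thm:analysis-control} is the determinism preservation, which I would establish directly. Assume $M^{+}$ is deterministic. Each transition of $M$ is either a direct replay of a non-yield $\ell$-transition $m^{+} \transition{\ell} m'^{+}$ of $M^{+}$ (with the mode component carried along, since $\yieldE$ and $\yieldC$ are the only labels that change the $Y$-component) or the contraction of a chain $m^{+} \transition{\gamma} m^{++} \transition{\ell} m'^{+}$ where $\gamma \in \{\yieldE, \yieldC\}$. In either case, starting from a fixed source pair $(m^{+}, \sigma)$ and a fixed label $\ell$, determinism of $M^{+}$ uniquely pins down every intermediate and final $M^{+}$-state in the chain, while the target mode is forced by the type of $\ell$ and the absorbed yield. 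Hence the successor in $M$ is unique and $M$ is deterministic.

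The step I expect to require the most care is the silent handling of the yield transitions $\yieldE$ and $\yieldC$: in $M^{+}$ they change the mode component, but in $M$ they have no counterpart as an action. One has to verify that contracting them does not introduce spurious branching and that every reached pair $(m^{+}, \sigma)$ still enables exactly the set of actions required by Definition~\ref{def:rtclegalEnvironment}. The key observation is that yield transitions only affect the mode component of the paired state, so their contraction is well-defined and preserves both legality and the trace correspondence with $M^{+}$ on which correctness rests.
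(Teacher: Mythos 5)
Your proposal matches the paper's own argument: the paper likewise builds $M$ on the state set $\{e,c\}\times T$ (giving the factor-of-two bound), defers legality, deadlock-freedom, and goal satisfaction to the proof of Theorem~\ref{thm:analysis-control}, and merely asserts determinism preservation, which your chain-contraction argument correctly justifies. One small imprecision: for uncontrollable actions taken from an environment-side state the paper's construction contracts, whenever possible, a $\yieldE\yieldC$ \emph{pair} of yields before the action (falling back to a direct replay only when no such detour exists, the two cases being mutually exclusive), not a single yield as you describe; but this three-step chain is still uniquely determined when $M^+$ is deterministic, so your determinism argument goes through unchanged.
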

\begin{proof} We only report the construction of $M$ as its correctness is immediate from the proof of Theorem~\ref{thm:analysis-control}.
Let $M^+=(T, P, A^+, \Gamma^+,L^+,t_0^+)$. 
The components of $M$ are defined as follows:
\begin{itemize}
    \item 
    $T_M=\set{e,c} \times T$ is the set of states;
    \item The alphabet
    $A_M=A^+ \ \backslash\ \set{\yieldC, \yieldE}$;
    \item
    The initial state $t'_0=(e,t_0^+)$;
    \item The transition relation $\Gamma$ is defined below:
\end{itemize}\vspace{3mm}
\[\scriptsize
    \begin{array}{ll}
    
    \left \{ ((e,t),\ell,(e,t')) \left |~ 
    \begin{array}{l}
    \ell\in\Uncontrollable\ 
    \text{and}\ \exists t_1,t_2\ \text{s.t.}\ (t,\yieldE,t_1)
    \in\Gamma^+,\\
    (t_1,\yieldC,t_2)\in\Gamma^+\ \text{and}\ 
    (t_2,\ell,t')\in\Gamma^+ 
    \end{array}
    \right . \right \}
    & \bigcup \\[.5cm]

    \left \{ ((e,t),\ell,(e,t')) \left |~ 
    \begin{array}{l}
    \ell\in\Uncontrollable,
    (t,\ell,t')\in \Gamma^+\ \text{and}\\
    \forall t_1. (t,\yieldE,t_1)\in \Gamma^+ \rightarrow \Gamma^+_{\yieldC}(c,t_1)=\emptyset
    \end{array}
    \right . \right \}
    & \bigcup \\[.5cm]

     \left\{ ((e,t),\ell,(c,t')) ~\left |~ 
            \begin{array}{l}
                (t,\yieldE,t'')\in
                \Gamma^+\ \text{and}\\[4pt]
                (t'',\ell,t')\in
                \Gamma^+\ \text{and}\ \ell\in\Controllable
            \end{array}
        \right . \right\} & \bigcup \\[.5cm]

     \{ ((c,t),\ell,(c,t')) ~|~ \ell\in\Controllable\ 
        \text{and}\ (t,\ell,t')
        \in\Gamma^+\} & \bigcup \\[.2cm]
    
         \left\{ ((c,t),\ell,(e,t')) ~\left |~ 
                \begin{array}{l}
                    (t,\yieldC,t'')\in\Gamma^+\
                    \text{and}\\[4pt]
                    (t'',\ell,t')\in\Gamma^+\
                    \text{and}\ \ell\in\Uncontrollable
                \end{array}
            \right .\right \} &  \\[8pt]

    \end{array}
    \]
\end{proof}
Note that the states of $M$ retain as an extra memory the information of whether a state is on the ``environment side'' or the ``controller side''.  Intuitively, for a state $t$ of $M^+$ the controller $M$ adds the memory of whether $t$ was reached by a controllable or uncontrollable transition.
If $t$ is reached by a controllable transition, $M$ implements all transitions possible 
from $t$ and all transitions possible from the $\yieldC$ successor of $t$ (if exists).
Dually, if $t$ is reached by an uncontrollable transition, $M$ implements all transitions possible from $t$ and all transitions possible from the $\yieldE$ successor of $t$ (if exists). 
Furthermore, whenever possible add a detour that includes both a $\yieldE$ and a $\yieldC$ before an uncontrollable action. 

\begin{theorem}\label{thm:analysis-gr1-control}\emph{(Analysis \sgr Control)}
Let $E$ be a DLTS $E=(S, P, \alphabet, \Delta,  L, s_{0})$,  
where $\alphabet=\Uncontrollable \cupplus \Controllable$ is the set of actions controlled and monitored by $E$, respectively, and let $\varphi=\G \invariant \wedge (\bigwedge_{i=1}^n \G\F \assume_i \rightarrow \bigwedge_{j=1}^m \G\F \guarantee_j)$ 
be an FLTL formula with $\invariant$, $\assume_i$ and $\guarantee_j$ Boolean combinations of fluents.

A solution for the RTC control problem $\varepsilon$ exists if and only if the standard control problem $\langle\livelockRemove{E\|Y},\varphi', \Controllable\cup\{\yieldC\}\rangle$ is controllable, where $\varphi'$ is as follows.
\[\begin{array}{c}
\G\invariant \wedge
\G\F(\enenv \vee (\bigwedge_{\ell \in \Uncontrollable}\neg \ell^p_E) )\wedge\hfill\\[1ex]

([\G\F (\ensys \vee  (\bigwedge_{\ell \in \Controllable}\neg \ell^p_E)) \wedge \bigwedge_{i=1}^n \G\F \assume_i \wedge \G\F \alphabet]\\[1ex]  \rightarrow \bigwedge_{j=1}^m \G\F \guarantee_j)\\[1ex]
\end{array}\]
\end{theorem}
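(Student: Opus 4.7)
My plan is to reduce the statement to Theorem~\ref{thm:analysis-control} and then show that, when $\varphi$ has the \sgr shape, the formula $\varphi^+$ produced by that theorem is winning-equivalent to the formula $\varphi'$ in the current statement. Writing $A = \G\F(\enenv \vee \bigwedge_{\ell\in\Uncontrollable}\neg\ell^p_E)$, $B = \G\F(\ensys \vee \bigwedge_{\ell\in\Controllable}\neg\ell^p_E)$, $C = \G\F\alphabet$, $X = \G\invariant$, $Y = \bigwedge_{i=1}^n \G\F\assume_i$, and $Z = \bigwedge_{j=1}^m \G\F\guarantee_j$, Theorem~\ref{thm:analysis-control} gives controllability equivalence with the standard problem on $\livelockRemove{E\|Y}$ using the formula $\varphi^+ = A \wedge (B \rightarrow (C \rightarrow \varphi))$.

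\smallskip
\textbf{Formula manipulation.} First I would substitute $\varphi = X \wedge (Y \rightarrow Z)$ into $\varphi^+$ and distribute the implication over the conjunction to obtain
\[
\varphi^+ \;\equiv\; A \,\wedge\, \bigl((B\wedge C) \rightarrow X\bigr) \,\wedge\, \bigl((B\wedge C \wedge Y) \rightarrow Z\bigr).
\]
Comparing term-by-term with $\varphi' = X \wedge A \wedge ((B\wedge C\wedge Y) \rightarrow Z)$, the liveness conjunct already matches, the $A$ conjunct matches, and the only real difference is the safety conjunct: $\varphi^+$ asks for $(B\wedge C) \rightarrow \G\invariant$, whereas $\varphi'$ asks for $\G\invariant$ unconditionally.

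\smallskip
\textbf{Safety strengthening.} The direction $\varphi' \rightarrow \varphi^+$ is immediate. For the other direction I would argue in terms of winning strategies rather than of formulas: any controller $M$ winning the standard problem with $\varphi^+$ must in fact ensure $\G\invariant$ on every trace of $M\parallel \livelockRemove{E\|Y}$. Suppose for contradiction that some trace $\pi$ has $\invariant$ violated at a finite position $k$. Because $\invariant$-violation is a finite event, it suffices to exhibit one environment continuation from position $k$ that satisfies $B\wedge C$, contradicting $\pi\models\varphi^+$. The continuation I would build is the natural adversarial one: while the play is in an $(s,e)$-state the environment performs an action of $\Uncontrollable$ if one is enabled (feeding $C$) and otherwise takes $\yieldE$; whenever control returns to $(s',e)$ this cycle repeats, ensuring $\ensys$ (or $pass_M$, if the controller refuses to enable any controllable) is visited infinitely often, which yields $B$.

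\smallskip
\textbf{Main obstacle.} The delicate step is verifying that such a continuation is always available inside $\livelockRemove{E\|Y}$, avoiding the possibility of getting trapped in a cycle of pure yield transitions (which would falsify $C$) or of hitting a state where the environment has no move at all. For the former I would use the fact that $\livelockRemove{\cdot}$ eliminates every yield transition leading to a state where only yields are enabled, so persistently choosing $\yieldE$ on the environment side while the controller keeps answering $\yieldC$ must eventually force a non-yield action (from either side), which together with the legality of $M$ in the standard problem, the deadlock-freedom requirement, and the conversion in the corollary that follows, guarantees $C$ on the extension. Pulling these observations together gives winning-equivalence of $\varphi^+$ and $\varphi'$ on $\livelockRemove{E\|Y}$, and composing with Theorem~\ref{thm:analysis-control} completes the proof.
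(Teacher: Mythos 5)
Your overall route is the same as the paper's: reduce to Theorem~\ref{thm:analysis-control}, observe after substituting the \sgr shape of $\varphi$ that $\varphi^+$ and $\varphi'$ differ only in whether $\G\invariant$ sits inside or outside the implication, and close that gap by arguing that a controller winning the conditional version must guarantee $\G\invariant$ unconditionally (the paper compresses this into a single sentence). Your formula manipulation is correct. The genuine problem is in the witness continuation you build for the safety-strengthening step. To contradict $(B\wedge C)\rightarrow\G\invariant$ you need an extension of the violating prefix that satisfies $B=\G\F(\ensys\vee\bigwedge_{\ell\in\Controllable}\neg\ell^p_E)$ as well as $C$. The strategy you propose --- take an action of $\Uncontrollable$ if one is enabled, otherwise take $\yieldE$ --- fails to deliver $B$ in the generic case: if uncontrollable actions are enabled forever, the play never leaves the environment side, so $\ensys$ (which only becomes true after a $\yieldE$) never holds, and $\bigwedge_{\ell\in\Controllable}\neg\ell^p_E$ need not hold either, since controllable actions may remain enabled in $E$ throughout. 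Then $B$ is false, $(B\wedge C)\rightarrow\G\invariant$ is vacuously true, and no contradiction is obtained. (You also write $pass_M$ where the relevant disjunct of $B$ is a condition on $E$, not on $M$.)

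The fix is to invert the priority. From a state $(s,e)$ the environment should take $\yieldE$ whenever that transition survives livelock removal, which makes $\ensys$ true; when it does not survive, Definition~\ref{def:removeLivelock} gives that no controllable action is enabled at $s$, so $\bigwedge_{\ell\in\Controllable}\neg\ell^p_E$ holds there --- either way a disjunct of $B$ is hit. To also secure $C$, interleave a real action after any $\yieldE,\yieldC$ round that produced none: if the controller answers $\yieldE$ with only $\yieldC$, the fact that $\yieldC$ was retained means some $\ell\in\Uncontrollable$ is enabled at $(s,e)$, legality forces the controller to enable it, and the environment can take it; if instead $\yieldE$ was not retained, deadlock-freedom forces a real uncontrollable action at $(s,e)$ directly. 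With this corrected continuation the machine-closure argument goes through; the rest of your proposal (the immediate direction $\varphi'\rightarrow\varphi^+$ and the appeal to Theorem~\ref{thm:analysis-control}) is fine, and the invocation of the corollary's controller conversion is not actually needed for the equivalence on $\livelockRemove{E\|Y}$.
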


\begin{proof}
The only difference is in the location of the safety. 
    Every computation of $M\| live(E\| Y)$ is a computation of $M\|E$ interspersed with yield actions. It follows that $\G\invariant$ holds.
\end{proof}

\begin{corollary}
The complexity of RTC control with GR(1) goals is in $O(n\times m\times|S|^3)$, where $|S|$ is the number of states of the environment.
\end{corollary}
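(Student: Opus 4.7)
The plan is to invoke Theorem~\ref{thm:analysis-gr1-control} to reduce the RTC control problem with GR(1) goals to a standard control problem on $\livelockRemove{E\|Y}$ with winning condition $\varphi'$, and then to bound the cost of solving that reduced game. First I would size the arena: the yield DLTS $Y$ has two states, so its parallel composition with $E$ has at most $2|S|$ states, and the livelock-removal operator only prunes transitions; hence the reduced arena has $O(|S|)$ states. The safety conjunct $\G\invariant$ of $\varphi'$ can then be absorbed into the arena in the usual way by removing transitions whose target violates $\invariant$, which does not affect the asymptotic state count.

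Next I would analyse the liveness part of $\varphi'$, which has the shape $\G\F q_1 \wedge (\G\F q_2 \wedge \bigwedge_{i=1}^{n}\G\F \assume_i \wedge \G\F \alphabet \rightarrow \bigwedge_{j=1}^{m}\G\F \guarantee_j)$, where $q_1$ and $q_2$ denote the RTC non-Zeno sentinels. Crucially, $\G\F q_1$ cannot be folded into the consequent of the implication, since that would discharge the obligation on $q_1$ whenever some environment assumption fails. I would therefore read the formula as a generalised Streett condition of index~$2$: one pair $(\text{true},q_1)$ plus one ``fat'' pair whose assumption side bundles $n+2$ recurrences and whose guarantee side bundles $m$ recurrences, matching the paper's introductory claim that RTC over \sgr reduces to Streett control of index~$2$.

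Finally, I would appeal to the standard symbolic $\mu$-calculus fixpoint algorithm for Streett(2) / GR(2) games: such games are solvable in $O(|S|^{3})$ symbolic steps on an arena of size $|S|$, and, just as in the Piterman--Pnueli--Sa'ar analysis of GR(1), each inner iteration pays a factor proportional to the product $n\cdot m$ of the sizes of the assumption and guarantee sets appearing in the ``fat'' pair. Combining these ingredients with the $O(|S|)$ bound on the arena yields the claimed complexity $O(n\cdot m\cdot |S|^3)$. The hardest part, I expect, is justifying the cubic (rather than quadratic) dependence on $|S|$: one must argue that the extra outer fixpoint induced by the unconditional recurrence $\G\F q_1$ cannot be avoided, and track carefully how the $n+2$ assumptions and $m$ guarantees distribute across the nested fixpoints so that the extra Streett pair $(\text{true},q_1)$ does not inflate the final bound.
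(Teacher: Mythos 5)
Your proposal is correct and follows essentially the same route as the paper: apply Theorem~\ref{thm:analysis-gr1-control} to obtain a standard control problem over $\livelockRemove{E\|Y}$ (only a constant-factor blow-up of the arena), observe that the resulting goal $\G\rho\wedge\G\F b\wedge(\bigwedge_i\G\F a_i\rightarrow\bigwedge_j\G\F g_j)$ is a (generalised) Streett condition of index~2, and invoke the cubic-time algorithm for Streett(2) games with the $n\cdot m$ factor coming from the assumption/guarantee counters. Your more careful bookkeeping of where the $n\cdot m$ factor enters (in the inner iterations rather than as a cubed state-space blow-up) is if anything a cleaner justification of the stated bound than the paper's brief appeal to ``adding counters.''
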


\begin{proof}
The goal in the modified control problem $\epsilon^+$ is of the form $\G\rho \wedge \G\F b \wedge (\bigwedge_{i=1}^n \G\F a_i \rightarrow \bigwedge_{j=1}^m \G\F g_j)$. 
By adding counters that range over the number of assumptions and the number of guarantees, this kind of goal can be converted to a Streett condition of index 2 \cite{streettcondition1982}. 
Control problems where the goal is a Streett condition of index 2 can be solved in time cubic in the number of states \cite{piterman2006faster}.  
\end{proof}


\section{Concluding Remarks and Future Directions}\label{section:relatedwork}
We introduced a novel control problem, named \emph{run-to-completion} (RTC), to deal with the asymmetric interaction between the controller and the environment commonly found in DES control. We showed that RTC control can be exploited to synthesise controllers for systems that can initiate and perform sequences of actions while responding correctly to external stimuli. 
This makes RTC suitable to control componentized systems with complex structures, and where a response to a single external stimulus may require several rounds of propagations among subsystems. Thanks to the flexible deadlines in RTC control, we are no longer required to count (or hardcode) the number of computation steps  for the system before it is ready again to react to the next stimulus. Furthermore, we avoid generating trivial controllers and we simplify the specifications by removing the explicit dependencies among subgoals, and thus facilitating incremental development. 

We showed that every instance of the RTC control problem can be reduced to a standard control problem, and finally we showed that when \sgr goals are used, RTC control can be reduced to Streett control of index 2~\cite{streettcondition1982}.

The notion of non-Zeno we have used is strongly related to fairness of the environment and the controller.
One could consider extensions in two different directions.
Our notion of non-Zeno allows the controller to force the environment to take some action. That is, in some cases where both controllable and uncontrollable actions are possible, we allow the controller to force the environment to move.
One could consider a weaker notion of non-Zenoness where the environment is not forced to take actions if it does not wish to do so.
Dually, we consider the controller non-Zeno if it often enough gives the environment the option to act (even if the environment cannot act).
This corresponds to the notion of weak fairness.
One could consider stronger restrictions on controllers in which they would have to be strongly-fair towards the environment. That is, if the environment can act infinitely often it should act infinitely often. 
Interestingly, one could consider even stronger notions, where strongly-fair controllers in addition completely block the environment only in cases where it is impossible to fulfil the goals when the environment acts infinitely often.

In many cases, studies of control of discrete event systems consider goals that are combinations of safety and non-blocking, while we have considered linear temporal goals.
In general, the techniques required to solve the two types of problems are very similar \cite{Ehlers17}.
The techniques developed in this paper can be adapted also to handle the case of non-blocking.
In the case of linear temporal goals it is well known that maximally permissive controllers do not exist. 
It is an interesting question whether RTC-control with safety and non-blocking goals allow for maximally permissive controllers.


\subsection{Related works}


In the prominent approach to  synthesis (such as \emph{Reactive Synthesis}~\cite{Pnueli:1989:SRM:75277.75293} and \emph{Supervisory Control}~\cite{ramadge89}) 
the uncontrolled plant has an advantage over the controller with respect to scheduling. Although, this may seem a natural understanding of the synthesis problem, where the controller is supposed to react to every possible behaviour of the plant, it is not always an appropriate assumption. In many cases, the uncontrolled plant is not completely adversarial (see~\cite{EhlersKB15}), and many undesired behaviours are practically infeasible and should be ruled out by definition, i.e., due to physical and/or software restrictions. For instance, a robot moving in an arena is restricted by its fixed structure~\cite{WongK15}, and thus it does not make sense to consider all possible paths between two points. 

These restrictions are usually dealt with by introducing domain specific assumptions over the plant (see~\cite{Kress-GazitFP09,piterman06}). However, these assumptions are not usually obvious and in many cases lead to spurious solutions (see~\cite{dippolito13}). 

A classic domain in which the uncontrolled plant is not completely adversarial is that of embedded systems where reactive languages (e.g.,~\cite{Halbwachs10, DBLP:conf/ifip/Berry89, Benveniste03}) adopt a synchronous hypothesis where the system can react to an external stimulus with all the computation steps it needs~\cite{DBLP:reference/crc/SimoneTP05}. To the best of our knowledge, RTC control is the first to automatically handle such assumptions.  

There have been many studies that focus on relating supervisory control and reactive synthesis, see~\cite{SchmuckMM20,Ehlers17}. However, some aspects are still not considered and that become more apparent with the approach presented herein. 
RTC control introduces a turn-based interaction between the controller and the plant that is similar to that of Reactive Synthesis~\cite{Pnueli:1989:SRM:75277.75293} for state-based models (i.e., no transition labels, only state propositions). Furthermore, in Reactive Synthesis both the controller and its adversary may perform in their own turn multiple actions concurrently. 
Yet in Reactive Synthesis the upper bound on actions per turn is determined by the number of state propositions, which is defined manually by the specifier before synthesis and it is not obvious how to reason about the order of concurrent events in state-based modelling. This becomes very important when dealing with systems that are required to do several rounds of data or control propagations among their subparts in response to external stimuli. Indeed, we may not know a-priory how many rounds of propagations are required or the order of events happening during the propagation. Furthermore, restricting the order (or the interleaving) of concurrent events manually might largely impact on the performance of the system under consideration. This is because hardcoded-orderings may easily sequentialise concurrent events that can safely be executed in parallel. 
Clearly, the last scenario poses a problem for both state-based models and event-based ones. Namely, once the events that can be executed simultaneously are explicitly identified, the flooding of adversarial events from the environment becomes as problematic as the one of uncontrollable events in discrete event systems. 
   



\bibliographystyle{IEEEtran}
\bibliography{IEEEabrv,bibliography}
\clearpage
\onecolumn
\section{Appendix}
\label{section:appendix}

Before we provide the proof for Theorem~\ref{thm:analysis-control}, we first introduce Lemma~\ref{lem:yield-properites} that shows the composition properties of the Yield DLTS. 
\begin{lemma}[Composition properties of Yield DLTS]\label{lem:yield-properites}
Let $E' = (S\times \{c,e\}, P', A^+, \Delta',  L', (s_0,e))$ with $A^+=U^+\cup C^+$ be a DLTS obtained by composing a domain model $E = (S, P, A, \Delta,  L, s_0)$ with a Yield DLTS $Y$. The followings hold:
\begin{enumerate}
\item   for all $\ell\in U$ . $\Delta'_{\ell}(s,e)\neq\emptyset$ {\bf iff} $\Delta_{\ell}(s)\neq\emptyset$.\medskip
    
 \item  for all $\ell\in U$ . $\Delta'_{\ell}(s,e)=\{(s',e)~|~ s'\in \Delta_{\ell}(s)\}$.\medskip
    
 \item for all $\ell\in C$ . $\Delta'_{\ell}(s,c)\neq\emptyset$ {\bf iff} $\Delta_{\ell}(s)\neq\emptyset$.\medskip
    
 \item for all $\ell\in C$ . $\Delta'_{\ell}(s,c)=\{(s',c)~|~ s'\in \Delta_{\ell}(s)\}$.\medskip

 \item $\Delta'_{\gamma_E}(s,e)=\{(s,c)\}$ {\bf and}  $\Delta'_{C^+}(s,c)=\Delta_{C}(s)\cup \Delta'_{\gamma_C}(s,c)$.\medskip

 \item $\Delta'_{\gamma_C}(s,c)=\{(s,e)\}$ {\bf and}  $\Delta'_{U^+}(s,e)=\Delta_{U}(s)\cup \Delta'_{\gamma_E}(s,e)$.\medskip

\end{enumerate}
\end{lemma}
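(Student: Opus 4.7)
} The plan is to unfold the definition of parallel composition (Definition~\ref{def:parallelcomp}) against the explicit transition relation of the Yield DLTS (Definition~\ref{def:yieldLTS}), and verify each of the six items by a direct case analysis on the action label and the $Y$-component of the state. The key observation is that the alphabets decompose nicely: $A_E = U\cupplus C$ is fully shared with $A_Y$, while the two yield labels $\{\gamma_E,\gamma_C\}$ belong only to $A_Y\setminus A_E$. Hence, for labels in $U\cup C$ only the synchronous rule of Definition~\ref{def:parallelcomp} applies, and for $\gamma_E,\gamma_C$ only the asymmetric rule (with the left premise replaced by an identity on the $E$-component) applies.

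First I would handle items (1)--(4), which concern ordinary actions. For item (1)/(2), fix $\ell\in U$ and a state $(s,e)$. By the transitions of $Y$, we have $e\transition{\ell}e$ and this is the unique $\ell$-successor of $e$ in $Y$. Since $\ell\in A_E\cap A_Y$, the synchronous rule gives $(s,e)\transition{\ell}(s',e)$ iff $s\transition{\ell}s'$ in $E$; therefore $\Delta'_\ell(s,e)=\{(s',e)\mid s'\in\Delta_\ell(s)\}$, which simultaneously settles emptiness (item~1) and the exact form (item~2). Items (3) and (4) are symmetric, using $c\transition{\ell}c$ in $Y$ for $\ell\in C$.

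Next I would dispatch items (5) and (6), which mix a yield action with the enabledness of the ordinary actions. For item (5), note that $\gamma_C\notin A_E$, so the asymmetric rule applies: $(s,c)\transition{\gamma_C}(s',e')$ iff $s'=s$ and $c\transition{\gamma_C}e'$ in $Y$, and since $(c,\gamma_C,e)$ is the only $\gamma_C$-transition of $Y$, we get $\Delta'_{\gamma_C}(s,c)=\{(s,e)\}$. The second half of item~(5) is immediate from item~(4) together with the just-established singleton for $\gamma_C$ and the fact that actions in $U$ are not enabled at the $c$-component of $Y$, so $\Delta'_{C^+}(s,c)=\Delta'_C(s,c)\cup\Delta'_{\gamma_C}(s,c)=\Delta_C(s)\cup\Delta'_{\gamma_C}(s,c)$ (identifying $(s',c)$ with $s'$ as the lemma's statement tacitly does). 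Item~(6) is symmetric, swapping the roles of $c/e$, $C/U$, and $\gamma_C/\gamma_E$.

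The proof is essentially bookkeeping; there is no real obstacle, only a small notational subtlety: the lemma writes $\Delta'_\ell(s,e)=\{s'\mid s'\in \Delta_\ell(s)\}$ rather than $\{(s',e)\mid\ldots\}$, so I would make explicit up front that the $Y$-component is uniquely determined by the action class, and then use this to freely drop the $c/e$ tag whenever convenient. Beyond this, each item is a one-line consequence of Definitions~\ref{def:parallelcomp} and~\ref{def:yieldLTS}.
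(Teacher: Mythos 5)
Your proof is correct and takes essentially the same route as the paper, which dispatches the lemma in one line as an immediate consequence of the semantics of $\|$ (Definition~\ref{def:parallelcomp}) and the construction of $Y$ (Definition~\ref{def:yieldLTS}); your case analysis on the action label and the $Y$-component of the state is simply that argument written out in full. (Minor nit: the tag-dropping abuse of notation you flag actually occurs in items (5)--(6), where $\Delta_C(s)\subseteq S$ is unioned with a subset of $S\times\{c,e\}$, not in items (1)--(4), but your resolution of it is fine.)
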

\begin{proof}
    Clearly Lemma~\ref{lem:yield-properites} follows by the semantics of the parallel composition operator $\|$ in Definition~\ref{def:parallelcomp} and the construction of $Y$ in Definition~\ref{def:yieldLTS} respectively. 

\end{proof}

\begin{proof}[Theorem~\ref{thm:analysis-control}]

\begin{itemize}
\item[$\Rightarrow$]

Consider a solution $M$ for the RTC control problem $\varepsilon$. 
Let $M=(S\times T,P_M, A,\Gamma,L_M, t_0)$, where $\Gamma$ is the transition relation and $S\times T$ is the set of states of the controller. Note that since $M$ is a solution to $\varepsilon$ we consider the set of states to be $S\times T$ for some set $T$. This simplifies notations when considering the product with $E^+$. 
We define the DLTS $M^+=(S^+, P^+_M, A^+, \Gamma^+,L_M^+,s_0^+)$, where the components of $M^+$ are as follows.
\begin{itemize}
    \item 
    $S^+=S\times\{e\} \times T\ \bigcup\  S\times\{c\} \times T \times \{1,2\}$
    \footnote{Two copies of $S\times\{c\} \times T $ are needed to allow the controller to run to completion as othewise the environment can always schedule a $\yieldC$ immediately after  $\yieldE$ and the controller loses its chance to perform actual controllable actions. In the construction, from state $(s,c,t,1)$ you only yield when controllable actions are not possible.}
    \item 
    $A^+=A \ \bigcup\ \{\gamma_C, \gamma_E\}$
    \item
    $s_0^+=(s,e,t_0)$
    \item $\Gamma^+$ is defined as follows:\medskip
   
    \end{itemize}
    $$
    \begin{array}{l@{ } c@{\ } l@{\ } l@{\ }}
    \Gamma^+  =& & 
    \left \{ ((s,e,t),\ell,(s',e,t')) ~\left | \begin{array}{l} \ell \in \Uncontrollable \mbox{ and}\\  ((s,t),\ell,(s',t')) \in \Gamma \end{array} \right . \right \} \ \bigcup \\[3ex]
    
    & & \{ ((s,e,t),\yieldE,(s,c,t,1)) ~|~ \Delta_\Controllable(s)\neq \emptyset \}  \quad\bigcup \\[1ex]
    & & {\{ ((s,c,t,1),\yieldC,(s,e,t) ~|~ \Gamma_\Controllable(s,t)=\emptyset\}}\quad  \bigcup \\ 
    & & \left \{ ((s,c,t,1),\ell,(s',c,t',2)) ~\left | \begin{array}{l}\ell \in \Controllable \mbox{ and}\\\ ((s,t),\ell,(s',t')) \in \Gamma \end{array} \right. \right \}   \bigcup \\[3ex]
    & & \{ ((s,c,t,2),\yieldC,(s,e,t)) ~|~ \Gamma_\Uncontrollable(s,t)\neq \emptyset \} \quad \bigcup \\[1ex]
    & &\left \{ ((s,c,t,2),\ell,(s',c,t',2)) ~\left | \begin{array}{l}\Gamma_\Uncontrollable(s,t)=\emptyset \mbox{ and}\ \ell \in \Controllable \\ ((s,t),\ell,(s',t')) \in \Gamma \end{array} \right. \right \}
    \end{array}
    $$
Let the states of $E^+=\livelockRemove{E\|Y}$
be the pairs in $S\times \{e,c\}$ and $\Delta^+$ is the transition relation of $E^+$.        
We show that $M^+$ solves the control problem $\varepsilon^+$.
\begin{itemize}
    \item 
    $M^+$ is legal for $E^+$: By Definition~\ref{def:legalEnvironment} and the construction of $E^+$, it is sufficient to only consider reachable states of the form $((s,e), (s,e,t))\in E^+\| M^+$. Intuitively, these are the only states that can enable uncontrollable actions $\ell\in U^+$. Thus, $M^+$ is legal for $E^+$ if for every reachable state $((s,e), (s,e,t))\in E^+\| M^+$ and for all $\ell\in U^+$ such that $\Delta^+_{\ell}(s,e)\neq\emptyset$ we have $\Gamma^+_{\ell}(s,e,t)\neq\emptyset$. Note that the second item of Definition~\ref{def:legalEnvironment}  follows directly from the last item of Definition~\ref{def:rtclegalEnvironment}.\medskip
    
    Now by Lemma~\ref{lem:yield-properites}, we have that  for all $\ell\in U$ . $\Delta^+_{\ell}(s,e)\neq\emptyset$ {\bf iff} $\Delta_{\ell}(s)\neq\emptyset$. By assumption $M$ is RTC legal for $E$, and thus for every reachable state $(s,(s,t))\in E\| M$ if $\Gamma_{\Uncontrollable}(s,t)\neq\emptyset$  then also for every $\ell\in \Uncontrollable$ such that $\Delta_{\ell}(s)\neq \emptyset$ we have that $\Gamma_{\ell}(s,t)\neq \emptyset$. By the construction of $\Gamma^+$, we have also $\Gamma^+_{\ell}(s,e,t)\neq\emptyset$. Now it is clear that for every reachable state $(s,(s,t))\in E\| M$ where $\Gamma_{\Uncontrollable}(s,t)\neq\emptyset$  then there is a correspondent state $((s,e), (s,e,t))\in E^+\| M^+$ that mimics it on  every $\ell\in \Uncontrollable$. Furthermore, $(s,e)\in E^+$ implements a yield transition  $((s,e),\yieldE,(s,c))$ only when $\Delta_\Controllable(s)\neq \emptyset$ as established in Definition~\ref{def:removeLivelock}. This is exactly the case in $\Gamma^+$ where a yield transition $((s,e,t),\yieldE,(s, c, t,1))$ is only implemented when $\Delta_\Controllable(s)\neq \emptyset$.

    \item 
    $M^+\| E^+$ is deadlock free:
    As $M$ solves $\epsilon$ we know that $E\|M$ has no deadlocks.\medskip

    {\bf Case 1:} Consider a state $((s,e), (s,e,t))\in E^+\|M^+$. From the legality of $M^+$ and the construction of $\Gamma^+$ based on $M$, we easily conclude this case. Note that for every uncontrollable action $\ell\in U$ enabled in $M$ in state $(s,t)$, it must be enabled in $M^+$ in state $(s,e,t)$. If $\Delta_C(s)\neq\emptyset$ then $M^+$ in state $(s,e,t)$ must enable a yield transition $\gamma_E\in U^+$ and move to a state $(s,c,t,1)$.
     \medskip

    {\bf Case 2:} Consider a state $((s,c), (s,c,t,1))\in E^+\|M^+$. By Lemma~\ref{lem:yield-properites} and Definition~\ref{def:removeLivelock}, we can conclude that any controllable action $\ell\in C$ is enabled in $(s,c)\in E^+$ iff it is enabled in $s\in E$. Note that $E^+=live(E\| Y)$ only restricts $\yieldC, \yieldE$. By the construction of $\Gamma^+$ if $(s,t)$ enables some controllable action $\ell\in C$ then $(s, c, t,1)$ will implement it.  If $(s,t)$ enables some controllable action then by construction of $\Gamma^+$ we have that $M^+$ implements them from $(s,c,t,1)$.
    If $(s,t)$ enables no controllable action then $M^+$ implements the action $\yieldC$ leading from $(s,c,t,1)$ to $(s,e,t)$.  
\medskip
     
    {\bf Case 3:} Consider a state $((s,c), (s,c,t,2))\in E^+\|M^+$.
        If $(s,t)$ has no uncontrollable transition enabled from it, then there must be controllable transitions enabled from it.
    Then $(s,c,t,2)$ implements them. If $(s,t)$ has some uncontrollable action enabled from it, then the $\yieldC$ transition is enabled from $(s,c,t,2)$, and consequently $M^+$ moves to $(s,e,t)$ state where these actions are implemented. \medskip
    
    We conclude the deadlock-freedom of $M^+$.
    
    \item 
    $M^+\| E^+\models \varphi^+$:
    By assumption $M\|E \models \fairsys \wedge (\fairenv \rightarrow \varphi)$.
    Consider a computation $\pi^+$ of $M^+\| E^+$. Let $\pi$ be the corresponding computation in $M\| E$.

%
%
\medskip
    First, we show that $\pi^+$ satisfies $\G\F (\enenv \vee (\bigwedge_{\ell\in \Uncontrollable} \neg \ell^p_E))$. 
    
    By assumption we know that $\pi\models \fairsys$. That is $\pi$ satisfies either $\G\F u$ or $\G\F (Pass_E)$. If $\pi\models \G\F u$ then clearly $\pi^+ \models \G\F \enenv$. Whenever $\pi$ visits a state $(s,t)$ where $Pass_E$ holds then $\pi^+$ must visit a state $(s,c,t,i)$ where $\bigwedge_{\ell\in\Uncontrollable}\neg \ell^p_E$ holds. It follows that in the latter case $\pi^+$ satisfies $\G\F(\bigwedge_{\ell\in\Uncontrollable}\neg \ell^p_E)$.
    \medskip
   
   If $\pi^+ \not\models \G\F(\ensys \vee (\bigwedge_{\ell \in \Controllable} \neg \ell^p_E))$ then we are done. 
   Otherwise, assume that $\pi^+ \models \G\F(\ensys \vee (\bigwedge_{\ell \in \Controllable} \neg \ell^p_E))$.  We have to show that $\pi^+ \models \varphi$.

The difficulty is that in all locations of the form $(s,e,t)$ we have that $\pi^+$ satisfies $\bigwedge_{\ell \in \Controllable} \neg \ell^p_E$. However, it is not necessarily the case that $(s,t)$ satisfies $\bigwedge_{\ell \in \Controllable} \neg \ell^p_M$.

   We construct from $\pi^+$ a computation $\pi^{++}$ that differs from $\pi^+$ only by adding $\yieldC$ and $\yieldE$ actions. Thus, $\pi^{++} \models \varphi$ if and only if $\pi^+$ satisfies $\varphi$. However, $\pi^{++}$ satisfies $\G\F (\bigwedge_{\ell \in \Controllable} \neg \ell^p_E) \wedge \F\G \neg \ensys$ if and only if $\pi$ satisfies $\G\F (\bigwedge_{\ell \in \Controllable} \neg \ell^p_M)$.
   
   Consider the computation $\pi^+$. We obtain $\pi^{++}$ from $\pi^+$ by replacing every occurrence of a state $(s,e,t)$ where $\Delta_C(s)\neq \emptyset$ but $\Gamma_C(s,t)=\emptyset$ by the sequence $(s,e,t),(s,c,t,1),(s,e,t)$ resulting from adding $\yieldE$ and $\yieldC$ transitions.
   
   Clearly, $\pi^{++}$ and $\pi^+$ agree on all actions in $A$ (i.e., all actions except $\yieldE$ and $\yieldC$).
   Thus, $\pi^{++}$, $\pi^+$ and $\pi$ all agree on satisfying or not satisfying $\varphi$. 
   Also, whenever $\pi^+$ visits a state where $(s,e,t)$ such that $(s,t) \not \models \bigwedge_{\ell \in \Controllable} \neg \ell^p_M$ we added a loop where the fluent $\ensys$ holds. 
   
   Now, if $\pi^{++} \models \G\F \ensys$ then it is either the case that $\pi^{++}$ satisfies $\G\F(\bigwedge_{\ell \in \Controllable} \neg \ell^p_M)$ or $\pi^{++}$ satisfies $\G\F c$. In the first case, $\pi$ satisfies $\G\F Pass_M$. In the second case, $\pi$ satisfies $\G\F c$. The only remaining case is where $\pi^{++}$ visits infinitely many locations $(s,e,t)$ such that $\Delta_C(s)=\emptyset$ but $\pi^{++}\models \F\G\neg \ensys$. However, whenever in $(s,e,t)$ we have $\Delta_c(s)=\emptyset$ we know from legality of $M$ that $\Gamma_C(s,t)=\emptyset$. It follows that in this case $\pi$ satisfies $\G\F(\bigwedge_{\ell \in \Controllable} \neg \ell^p_E)$ in such cases, from legality of $M$ it follows that $(s,t)$ satisfies satisfies $\G\F (\bigwedge_{\ell \in \Controllable} \neg \ell^p_M)$ as well. 
   We conclude that $\pi$ satisfies $\G\F (c \vee Pass_M)$. 
   
   As $M$ is an RTC controller for $E$, we have $\pi\models \varphi$. 
   So $\pi^{++}$ satisfies $\varphi$ as well and hence also $\pi^+$. 

\end{itemize}
\item[$\Leftarrow$]
Consider a solution $M^+$ for the control problem $\varepsilon^+$. 
To simplify notations we consider the set of states of $M^+$ to be $S \times \set{e,c} \times T$ for some $T$. As $M^+$ is a solution to $\varepsilon^+$ this is equivalent to considering what happens in the product of $M^+$ and $E^+$. 

Let $M^+=(T^+, P^+, A^+, \Gamma^+,L^+,t_0^+)$, where $T^+=S \times \set{e,c}\times T$ is the set of states of the controller, $\Gamma^+$ is the transition relation, and  the initial state $t_0^+=(s_0,e,t_0)$. 

Let $M=(T', P_M, A, \Gamma,L_M,(s_0,e,t_0))$ be the DLTS with the following components.
\begin{itemize}
    \item 
    $T'=S\times \set{e,c} \times T$ is the set of states (i.e., same as $M^+$)
    \item 
    $A=A^+ \ \backslash\ \set{\yieldC, \yieldE}$ is the alphabet
    \item
    $(s_0,e,t_0)$ is the initial state
    \item 

$$
    \begin{array}{l c l r}
    \Gamma & = & 
    \left \{ ((s,e,t),\ell,(s',e,t')) \left |~ 
    \begin{array}{c} \ell\in\Uncontrollable\ 
    \text{and}\ \exists t'',t''' \text{ s.t. } 
    ((s,e,t),\yieldE,(s,c,t''))\in\Gamma^+,\\
    ((s,c,t''),\yieldC,(s,e,t''')) \in \Gamma^+, \text{ and }
    ((s,e,t''),\ell,(s',e,t'))
    \in\Gamma^+\end{array} \right. \right \} & \bigcup \\[.5cm]

& &    \left \{ ((s,e,t),\ell,(s',e,t')) \left |~ 
    \begin{array}{c} \ell\in\Uncontrollable,
    ((s,e,t),\ell,(s',e,t'))
    \in\Gamma^+, \mbox{ and}\\
    \forall t'' ~.~ 
    ((s,e,t),\yieldE,(s,c,t''))\in\Gamma^+ \rightarrow
    \Gamma^+_{\yieldC}(s,c,t'')=\emptyset
    \end{array} \right. \right \} & \bigcup \\[.5cm]

    & & \left\{ ((s,e,t),\ell,(s',c,t')) ~\left |~ 
            \begin{array}{c}
                ((s,e,t),\yieldE,(s,c,t''))\in
                \Gamma^+\ \text{and}\\[4pt]
                ((s,c,t''),\ell,(s',c,t'))\in
                \Gamma^+\ \text{and}\ \ell\in\Controllable
            \end{array}
        \right . \right\} & \bigcup \\[.5cm]

    & & \{ ((s,c,t),\ell,(s',c,t')) ~|~ \ell\in\Controllable\ 
        \text{and}\ ((s,c,t),\ell,(s',c,t'))
        \in\Gamma^+\} & \bigcup \\[.5cm]
    
        & & \left\{ ((s,c,t),\ell,(s',e,t')) ~\left |~ 
                \begin{array}{c}
                    ((s,c,t),\yieldC,(s,e,t''))\in\Gamma^+\
                    \text{and}\\[4pt]
                    ((s,e,t''),\ell,(s',e,t'))\in\Gamma^+\
                    \text{and}\ \ell\in\Uncontrollable
                \end{array}
            \right .\right \} &  \\[8pt]

    \end{array}
    $$    
\end{itemize}
Notice that states of $M$ retain as an extra memory the information of whether a state is on the ``environment side'' or the ``controller side''.  
Furthermore, if it is possible to ``do a round'' through the controller side before taking an uncontrollable action, then this is taken. 

We show that $M$ solves the RTC control problem $\varepsilon$.
\begin{itemize}
    \item 
    $M$ is RTC legal for $E$: By definition~\ref{def:rtclegalEnvironment}, we show that for every reachable state $(s,u)$ of $E\| M$ the following holds.
\begin{enumerate}
\item 
If $\Gamma_{\Uncontrollable}(u)\neq\emptyset$  then for every $\ell\in \Uncontrollable$ such that $\Delta_{\ell}(s)\neq \emptyset$ we have that $\Gamma_{\ell}(u)\neq \emptyset$.\medskip

If $u=(s,e,t)$ then as $(s,e,t)$ implements all uncontrollable transitions in $M^+$ the same is true for $u$ in $M$. Either by taking the transition directly or taking a detour of $\yieldE$ and $\yieldC$ before taking the transition. 

If $u=(s,c,t)$ there are two options. Either there is no transition $((s,c,t),\yieldC,(s,e,t'))$ in $M^+$, in which case no uncontrollable transitions are implemented in $\Gamma$. Or there is a transition $((s,c,t),\yieldC,(s,e,t'))$ in $M^+$, in which case all uncontrollable transitions possible from $s$ are implemented in $\Gamma$.
\item
 If $u\in \Gamma_{\Uncontrollable}(T)$ then for every $\ell\in \Uncontrollable$ such that $\Delta_{\ell}(s)\neq \emptyset$ we have that $\Gamma_\ell(u)\neq \emptyset$.\medskip
 
 By construction, we know that $u=(s,e,t)$. From legality of $M^+$ we conclude that for every $\ell \in \Uncontrollable$ such that $\Delta_{\ell}(s)\neq \emptyset$ we have $\Gamma_\ell(s,e,t)\neq \emptyset$.
\item
For every $\ell\in\Controllable$ such that $\Delta_\ell(s)=\emptyset$ we  have that $\Gamma_\ell(t)=\emptyset$.\medskip

This item follows directly from the fact that $\Gamma$ is based on $\Gamma^+$, and thus only enables controllable actions that are enabled in $\Gamma^+$. Furthermore,  $M^+$ is legal for $E^+$, and thus $\Gamma^+$ does not enable controllable actions that are  not enabled in $\Delta^+$, and consequently are not enabled in $\Delta$ as required. 
\end{enumerate}

    \item 
    $M\| E$ is deadlock free: As $M^+$ solves $\epsilon^+$ we know that $E^+\|M^+$ has no deadlocks.\medskip
    
    Consider a state $(s, (s,c,t))\in E\|M$. 
    If $\Gamma^+_{\yieldC}(s,c,t)\neq\emptyset$ then, by definition of $\livelockRemove{E\| Y}$ we know that $\Gamma^+_{\Uncontrollable}(s,e,t)\neq\emptyset$. It follows that $(s,c,t)$ implements all uncontrollalble transitions implemented from $s$.
    If $\Gamma^+_{\yieldC}(s,c,t)=\emptyset$ then, by $((s,c),(s,c,t))$ not being a deadlock in $M^+$ we know that $\Gamma^+_{\Controllable}(s,c,t)\neq \emptyset$.
    
    Consider a state $(s,(s,e,t)) \in E\|M$.
    We know that $((s,e),(s,e,t))$ is not a deadlock in $E^+\| M^+$.
    If some uncontrollable transition is available from $s$ then it is implemented from $(s,e,t)$.
    If no uncontrollable transition is available from $s$ then it must be the case that $\yieldE$ is available from $((s,e),(s,e,t))$ to $((s,c),(s,c,t')$. However, as no uncontrollable transition is available from $S$ it follows that $\yieldC$ is not implemented from $(s,c)$ implying that some controllable transition is available from $((s,c),(s,c,t'))$.
    It follows that the same controllable transition is available from $(s,(s,e,t))$.
    \newpage

    \item 
    $M\| E\models \varphi$:
        By assumption $M^+\|E^+ \models \varphi^+$.
        Consider a computation $\pi$ of $M\| E$.
        Every transition in $\pi$ corresponds to a transition or a sequence of transitions in $M^+\|E^+$.
        Consider the computation $\pi^+$ that is the concatenation of all these transitions.
        
    First, we show that $\pi\models \fairsys$. That is $\pi$ satisfies either $\G\F u$ or $\G\F Pass_E$.
    By assumption we know that $\pi^+$ satisfies $\G\F (\enenv \vee (\bigwedge_{\ell\in \Uncontrollable} \neg \ell^p_E))$. 

    Whenever $\pi^+$ satisfies $\bigwedge_{\ell\in\Uncontrollable} \neg \ell^p_E$ so does $\pi$.
    Consider the case that $\pi^+$ satisfies $\G\F \enenv$.
    
    By the structure of $E^+$, $\pi^+$ satisfies $\enenv$ either after a transition labeled by some $\ell \in \Uncontrollable$ has been taken or after a transition labeled $\yieldC$.
    However, by construction of $M$, a transition in $\pi^+$ labeled by $\yieldC$ appears only as part of a pair of transitions corresponding to a single transition labeled by $\ell\in\Uncontrollable$ of $\pi$. 
    It follows that $\pi\models \G\F u$.
    \medskip
    
    Now, as $M^+$ is a solution for $\epsilon^+$, $\pi^+$ satisfies $(\G\F A \wedge \G\F (\ensys \vee (\bigwedge_{\ell \in\Controllable}\neg \ell^p_E))) \rightarrow \varphi$. We have to show that $\pi$ satisfies $(\psi_c \rightarrow \varphi)$. \medskip
      
    If $\pi\not\models \psi_c$ then we are done.
    Otherwise, consider the case that $\pi$ satisfies $\psi_c$.
      
    If $\pi$ satisfies $\G\F c$ then $\pi^+$ satisfies $\G\F A$ and $\G\F\ensys$ implying that both $\pi^+$ and $\pi$ satisfy $\varphi$.

    Otherwise, $\pi$ satisfies $\G\F pass_M\equiv \G\F (\bigwedge_{\ell\in\Controllable}\neg\ell^p_M)$. 
    Consider a state $(s,(s,e,t))$ of $\pi$ where $\bigwedge_{\ell\in\Controllable}\neg\ell^p_M$ is true. This is matched by $\pi^+$ visiting state $(s,e,(s,e,t))$ and the next action to happen in $\pi$ must be some $\ell\in\Uncontrollable$.
    One of the following holds:
    \begin{itemize}
        \item 
        either $(s,(s,e,t))$ satisfies $\bigwedge_{\ell\in\Controllable}\neg\ell^p_E$ as well.  
        
        In this case the $\yieldE$ transition is not included from $(s,e)$ in $\livelockRemove{E\|Y}$.
        Furthermore, $\pi^+$ also satisfies $\bigwedge_{\ell\in\Controllable}\neg\ell^p_E$ in state $(s,e,(s,e,t))$, as this depends only on the state $s$ of $E$.
        \item 
        or $(s,(s,e,t))$ does not satisfy $\bigwedge_{\ell\in\Controllable}\neg\ell^p_E$.
        
        In this case, $(s,e)$ has a $\yieldE$ successor $(s,c)$ in $\livelockRemove{E\|Y}$. 
        Let $(s,c,(s,c,t''))$ be the $\yieldE$ successor of $(s,e,(s,e,t))$ in $M^+\|E^+$.
        We know that $(s,(s,e,t))$ satisfies $\bigwedge_{\ell\in\Controllable}\neg\ell^p_M$.
        Thus, $(s,c,(s,c,t''))$ has no controllable transition possible from it (otherwise $(s,e,t)$ would include that successor contradicting $(s,(s,e,t))$ satisfying $pass_M$).
        However, $(s,c,(s,c,t''))$ cannot be a deadlock.
        Then, $(s,c,(s,c,t''))$ has a $\yieldE$ successor $(s,e,(s,c,t'''))$.
        By construction, the (sequence of three) transitions in $\pi^+$ that correspond to the transition in $\pi$ on action $\ell$ includes a state where $\ensys$ is true.
    \end{itemize}
    Overall, as $\pi$ satisfies $\G\F pass_M$ it follows that $\pi^+$ satisfies either $\G\F\ensys$ or $\G\F(\bigwedge_{\ell\in\Controllable}\neg\ell^p_E)$.
    Furthermore, as $\pi$ includes infinitely many actions, $\pi^+$ must satisfy $\G\F A$.

    Thus, both $\pi^+$ and $\pi$ satisfy $\varphi$.
      
\end{itemize}
    We note that this construction maintains determinism of the generated RTC controller. 
    That is, if the controller $M^+$ is deterministic then the controller $M$ is deterministic as well.
\end{itemize}
\end{proof}

\end{document}